\pdfoutput=1

\documentclass{aamas2016}
\usepackage{algorithm}
\usepackage{algorithmic}
\usepackage{url}
%\usepackage{hyperref}

 %Use Input in the format of Algorithm
 %UseOutput in the format of Algorithm

\pdfpagewidth=8.5truein
\pdfpageheight=11truein

\begin{document}

\title{Market Share Analysis with Brand Effect}

\numberofauthors{2}

\author{
\alignauthor Zhixuan Fang\\
       \affaddr{Tsinghua University}\\
       \affaddr{Beijing, China}\\
       \email{fzx13@mails.tsinghua.edu.cn}
% 2nd. author
\alignauthor Longbo Huang\\
       \affaddr{Tsinghua University}\\
       \affaddr{Beijing, China}\\
       \email{longbohuang@tsinghua.edu.cn}\\
}

\maketitle

\begin{abstract}
In this paper, we investigate the effect of brand in market competition. Specifically, we propose a variant Hotelling model where companies and customers are represented by points in an Euclidean space, with axes being product features.
%Here we focus our attention on 1 or 2 feature market, while our methods can be adapted to higher dimensional market space.
$N$ companies compete to maximize their own profits by optimally choosing their prices, while
each customer in the market, when choosing sellers, considers the sum of product price, discrepancy between product feature and his preference, and a company's brand name, which is modeled by a function of its market area of the form $-\beta\cdot\text{(Market Area)}^q$, where $\beta$ captures the brand influence and $q$ captures how market share affects the brand.
By varying the parameters $\beta$ and $q$, we derive existence results of Nash equilibrium and equilibrium market prices and shares.
In particular, we prove that pure Nash equilibrium always exists when $q=0$ for markets with either one and two dominating features, and it always exists in a single dominating feature market when market affects brand name linearly, i.e.,  $q=1$.
Moreover, we show that at equilibrium, a company's price is proportional to its market area over the competition intensity with its neighbors, a result that quantitatively reconciles the common belief of a company's  pricing power. We also study an interesting ``wipe out'' phenomenon that only appears when $q>0$, which is similar to the ``undercut'' phenomenon in the Hotelling model, where companies may suddenly lose the entire market area with a small price increment. Our results offer novel insight into market pricing and positioning under competition with brand effect.
\end{abstract}

% Note that the category section should be completed after reference to the ACM Computing Classification Scheme available at
% http://www.acm.org/about/class/1998/.

%\category{I.2.11}{Artificial Intelligence}{Distributed Artificial Intelligence, Multi-agent Systems}

%A category including the fourth, optional field follows...
%\category{D.2.8}{Software Engineering}{Metrics}[complexity measures, performance measures]

%General terms should be selected from the following 16 terms: Algorithms, Management, Measurement, Documentation, Performance, Design, Economics, Reliability, Experimentation, Security, Human Factors, Standardization, Languages, Theory, Legal Aspects, Verification.

%\terms{Economics, Theory}

%Keywords are your own choice of terms you would like the paper to be indexed by.

%\keywords{Game theory, Market share, Brand effect}

\section{Introduction}
According to Gartner's recent report \cite{Gartner2014}, Samsung suffered a significant market share drop in smartphones in year $2014$, i.e., dropped from a dominant $32.1\%$ market share in $2013$ Q3 to only $24.4\%$ in $2014$ Q3. During the same time, Apple gained a mild share growth from $12.1\%$ to $12.7\%$.
While market fluctuations are normal, somewhat surprisingly, this occurred when Samsung decreased the average selling price (ASP)  from $\$235$ to $\$209$  of its smartphones (targeting customers who prefer low-end phones) while Apple steadily increased its ASP (targeting customers who prefer high-end phones)   \cite{Thurrott} \cite{apple}.
This phenomenon appears to contradict with the common belief that  lowering selling prices is an efficient way for boosting market share. %\textcolor{red}{need to echo this story after the theorems}
Motivated by this phenomenon,  we carry out our study on market area equilibrium based on a   general model.

In our model, companies are represented by points in an Euclidean space, where each axis denotes a particular product feature. Consumers are located uniformly in the same space and their coordinates specify their feature preference.
Each company sets its product price. Consumers determine their choice of companies based on   the summation of the product price,  discrepancy between product feature and his desire, and the brand name of the company.
In this paper, we mainly focus on one-dimensional and two-dimensional Euclidean market space, i.e., one-feature or two-feature, since consumers in the real world often consider only one or two major features when  choosing commodities. For instance, the appearance and performance for smartphone customers, or as in the Hotelling model, where two-dimension setting is also enough to describe a real world's transportation.
Even so, note that many of our results can adapt to higher dimension.
Different from location-price game models such as \cite{irmen1998competition}, where location-selection is part of a strategy, our model assumes pre-determined locations (features).\footnote{Such a model approximates the situation when companies are constrained by various factors and have limited flexibility in  choosing product features. Doing so also allows us to focus on the market space structure,  pricing and brand effect, three important aspects that have not been jointly studied before in the market equilibrium context.}

Then, in order to study the effect of brand names, we propose a novel  brand effect measure,  which has the form of \emph{Brand effect $=-\beta\cdot S^q$}, where $S$ is a company's market area, or market share under normalization, $q$ denotes how the market share contributes to the brand name, and $\beta$ represents the degree to which customers consider the brand names when
making decisions.\footnote{In practice, different companies may have different $\beta$ values. Here we use the same $\beta$ value, so as to simplify the analysis without sacrificing the economic insight. }
%The brand effect influences customers by being a constituent part of product price.
%
By varying the values of $\beta$ and $q$, we model different markets where market share has different effect on the brand name and where customers can have varying respect for the brand.

Note that our model is not restrictive. Consumers' decisions are often complex and are affected by various factors. Therefore, the exact degree to which brand name affects them has been a subject of continuous investigation, e.g., \cite{aggarwal2004effects} and \cite{buil2013influence}.
However, we already know that market share is often considered a major valid reflection of the brand's value \cite{aaker1996measuring}. Thus, we use the market share, or market area to reflect the brand effect.
Our model is motivated by the classic  Hotelling model \cite{1929} \cite{d1979hotelling}, where the total price a consumer considers is the sum of product price and the distance between consumer and company, based on linear or quadratic distance function.

Under our model, we first show the existence of Nash equilibrium among $N$ companies for the zero brand effect case, i.e.,  $q=0$ in both single and dual feature market. Then, we prove the existence for the linear brand effect case, i.e., $q=1$, in single feature market.
%in the single-feature market and provide sufficient conditions for its existence in a dual-feature market.
We also derive properties of  a company's market area and pricing strategy at equilibrium. In particular,  we show that  for any company $j$, the relation between its market area and pricing at equilibrium takes the following concise form :

\begin{eqnarray}
\text{Price}_j \propto\frac{\text{Market}_j}{\text{Competition Intensity}}. \label{eq:equilibrium-eq}
\end{eqnarray}
%where competition intensity
%
Moreover, competition intensity increases as $\beta$ grows. This result gives an explicit characterization of the equilibrium price and market area.

Besides giving the description of the equilibrium, we also discover and analyze an interesting ``wipe out'' phenomenon that can occur in the market when $\beta$ exceeds certain threshold value: some companies' market area will immediately shrink to zero with an arbitrarily small price increment. This kind of ``sudden death'' phenomenon is very similar to the ``undercut phenomenon'' in the Hotelling model with linear distance function.

 The main contributions of this work are summarized as follows.
 \vspace{-.1in}
% We list our main contributions here.
\begin{itemize}
\item To the best of our knowledge, we are the first to connect brand name effect to the market space and analyze the market equilibrium in this case.
    \vspace{-0.1in}
\item We develop techniques to prove the existence of market equilibrium when agents' utility functions are piecewise continuous. This enables us to  handle the situation when neighbor competitors change dynamically.
    \vspace{-0.25in}
\item We show that one company's pricing power is proportional to its market area at market equilibrium and is inversely proportional to competition intensity.
    \vspace{-0.1in}
\item We develop the  dimensionality reduction reasoning  approach that allows us to extend the problem into higher dimension.
%,  and gain some similar results compare with 1 dimension situation.
\end{itemize}

The organization of the rest of this paper is as follow:
Section \ref{section:relevant} reviews the existing literature. %works in relevant fields.
Section \ref{section:model} presents our model and setting, and Section \ref{section:market_structure} gives the market struture analysis results. Section \ref{section:q=0} and \ref{section:q>0} have the  results for cases $q=0$ and $q=1$.  Conclusions  come at the last and proofs are in appendix.

\subsection{Notations} Bold symbol $\mathbf{x}$ denotes a vector and capital italic symbol $\mathcal{X}$ denotes a set.  $\mathbf{cl}(\mathcal{X})$  and  $\mathbf{int}(\mathcal{X})$ denote the closure and the interior of $\mathcal{X}$. We use $X=|\mathcal{X}|$ to denote its volume, e.g., if $\mathcal{S}$ denotes the set of points in a unit circle on a plane,  $S=|\mathcal{S}|$ denotes its area.
%For uniformity, particularly when $\mathcal{X}$ is the set of points in space,
%
Exclusion of a set is denoted by $\mathcal{X}/\mathcal{A}=\{x\,|\,x\in \mathcal{X}, x\notin \mathcal{A}\}$.

\section{Related work} \label{section:relevant}
Market analysis has been a subject with extensive studies.
In  Hotelling's seminal paper \cite{1929}, an one-dimensional two-seller market with linear shipping cost was considered. It is concluded
%(\textcolor{red}{do you want to say prove here??==Hotelling didn't give rigorous proof})
in the paper that companies  follow the ``Principle of Minimum Differentiation'' in  competition, which states that companies choose similar strategy in location and pricing.
Later,  \cite{d1979hotelling} pointed out that location-price equilibrium may not always exists under the linear cost model, and a quadratic distance function was used instead to assure the existence of equilibrium. The paper also proves the principle of maximal differentiation holds, which indicates that companies  tend to produce products with  opposite characteristics.

Since then a lot of works have been done to deepen our understanding about market equilibrium. \cite{1986} proved the existence of mixed strategy of this two-stage game, and \cite{Kats199589} proved the existence of subgame perfect  equilibrium in pure strategy in an one-dimensional torus market.
In \cite{veendorp1995differentiation},  the market was extended to two-dimensional and they showed that maximal differentiation in one dimension is enough for existence of equilibrium.
% \textcolor{red}{I mean what does ``was enough'' mean}.
 \cite{irmen1998competition} extended the market to multi-dimensions and showed that firms tend to maximize the difference in the dominant dimension but minimize differentiation in other dimensions.  \cite{lauga2011product} also showed similar results for a  duoplay two-dimensional market.  %considering marginal quality cost.

Besides the existence results, properties of market area have also received a lot of attentions. Boundary of market with respect to different distance costs was studied in \cite{fetter1924economic}, \cite{hyson1950economic}, and \cite{hanjoul1989advances}. Researchers also studied the brand effect in \cite{wernerfelt1991brand}, where it was divided into two types as inertial and cost-based, while \cite{villas2004consumer} and \cite{givon1978market} studied the brand choice through repeated learning and Markov process. The main difference between previous studies on brand effects and our work is that we use market area to reflects brand effect, which is by far the first to connect market space and brand name in analyzing market equilibrium, while previous works usually treat the brand name effect and market separately.
%\textbf{==== briefly mention the differences between our work and the ones on brand effect ====}
%\textcolor{red}{they have studied brand effect already? What are the differences of our work and theirs??}

\section{Model and Preliminaries} \label{section:model}
Consider an abstract market modeled by the $K$-dimensional Euclidean space $\mathcal{M}=\mathbb{R}^K$, where each axis   represents  one feature of the products in consideration. For example,  $\mathcal{M}=\mathbb{R}^2$ can represent the smartphone market, where one coordinate can be the camera quality and the other coordinate can be the CPU speed of the phone.
As another example, $\mathcal{M}$ can  represent the market in the most traditional way as in  the Hotelling model \cite{1929}, in which case coordinates denote the locations of companies and customers.

%Suppose companies' coordinates in $\mathcal{M}$ is randomly decided.

We use $\mathcal{N}'$ to denote the set of  companies in the market.
Each company's coordinates in the market are fixed and are represented by a point $\mathbf{x_{k}}\in\mathcal{M}, \forall k\in\mathcal{N}'$.
Without loss of generality, we assume that companies are placed all over the market.\footnote{That is, for all $\mathbf{x_0}\in \mathcal{M}$, $\exists r<\infty$ such that there exists at least one company in the ball $\{\mathbf{x}|\,\left\Vert\mathbf{x}-\mathbf{x_0}\right\Vert_2<r\}$ }
For simplicity, we assume that all companies produce  products different only in features we consider, with zero cost and no limit on production capacity.
Since there are infinitely many companies in the market, we only focus on some chosen companies among them.
Specifically, let $\mathcal{M}_{chosen}(B)=\{\mathbf{x}\,|\,\left\Vert \mathbf{x}\right\Vert_{\infty}\leq B/2\}$ be a cube in $\mathcal{M}$ with  edge length being $B$ and use $\mathcal{N}$ to denote  companies inside $\mathcal{M}_{chosen}(B)$, i.e.,
\[
\mathcal{N}=\{k\,|\,\mathbf{x}_{k}\in\mathcal{\mathcal{M}}_{chosen}(B),
k\in\mathcal{N}'\}
\]

Then, we focus on the companies located inside $\mathcal{M}_{chosen}(B)$\footnote{Note that there are many ways to delimit the area, we use $\mathcal{M}_{chosen}(B)$  for simplicity in presentation, since it becomes a square in 2D market.} and
we choose $B$ so that $N=|\mathcal{N}|>0$.
%\textbf{==== why need the finite requirement? ====}
%We only consider the situation when $\mathcal{N}$ is not empty, therefore $\infty>N>0$.
%
In other words, we assume an unbounded market with infinite companies, but discuss only those   inside a bounded  set. Doing so allows us to eliminate the boundary effect as in \cite{1929} while not ignoring the competition between companies reside near the boundaries.\footnote{Note that this is important. As an example, suppose  we define a city block as $\mathcal{\mathcal{M}}_{chosen}(B)$. Then, customers live near the edge  will go to the coffee shop at the next block if they find that more worthwhile. Hence, companies at the corners still face competition from those outside the bounded set.}

We denote $P_i$ the product price of a company $i$, which is the price it charges for its  product. %\textbf{==== I don't see anywhere about the notion ``mill price''??? ====}
We assume that $P_i\in[0, P_{upper}]$,  where $P_{upper}$ is some sufficiently large but finite price upper bound  for every company.
To gain accurate result on the chosen objects, in the following, we assume that the product prices of companies outside $\mathcal{M}_{chosen}(B)$ always remain constant and only
consider the competition among companies in $\mathcal{M}_{chosen}(B)$. That is,
%, and try
%we will assume that the mill prices of companies outside $\mathcal{M}_{chosen}$ will always remain constants. Thus,
\[
P_j\equiv P_{j0}, \forall\, j \in \mathcal{N}'/\mathcal{N}.
\]
%Hence we can focus on competiton of companies in $\mathcal{N}$.
We then use $\mathbf{P}=(P_i, i\in\mathcal{N})$ to denote the price vector for the companies in consideration.

The points in $\mathcal{M}$ denote the set of customers, where each coordinate specifies a customer's desired value of that feature. %  Thus, the demand of market is uniformly distributed over the market space. %We denote each customer with a point $\mathbf{x}\in\mathcal{M}$, where each coordinate specifies her desired value of the feature.
Then, each customer will choose to purchase one unit product from a company that provides maximum satisfaction.  For instance, customers often first care about the product price when choosing companies. Then, they will also take into account the personal preferences, e.g.,  smartphone buyers may prefer the ones that match his  usage requirements, while coffee buyers may prefer the shops at closer distances. Furthermore, they may also value the brand name of a company when making decisions.
To model this customer behavior, we denote $P_i\{\mathbf{x}\}$ the \emph{aggregate} price customer $\mathbf{x}$ sees from company $i$ (its components will be specified later) and assume that customers always purchase at companies that offer the lowest price.
By choosing different forms of $P_i\{\mathbf{x}\}$, one can take into account various factors.
With the aggregate prices, we define the \emph{ownership} of a customer and the \emph{market area} of a company as follows:
\newdef{definition}{Definition}
\begin{definition}
(Ownership) $\mathcal{O}'(\mathbf{x}, \mathbf{P})$
is the set of companies who offer the \emph{lowest} aggregate price at any $\mathbf{x} \in \mathcal{M}$, i.e.,
\begin{eqnarray}
\mathcal{O}'(\mathbf{x}, \mathbf{P})=\{i \,|\, i\in\mathbf{argmin}\thinspace P_{i}\{\mathbf{x}\}\}.
\end{eqnarray}
 \end{definition}
In other words,   customers always choose to purchase goods at companies that offer the lowest price.

\begin{definition} (Market area)
Company $i$'s market area is given by $S_i = |\mathcal{S}_{i}|$, where $\mathcal{S}_{i}$
is the set of customers for which $i$ is the \emph{unique} owner, i.e.,
\begin{eqnarray}
\mathcal{S}_{i}=\{\mathbf{x}\,|\,\mathcal{O}'(\mathbf{x}, \mathbf{P})=\{i\}\}. \label{eq:market}
\end{eqnarray}
 \end{definition}
 In (\ref{eq:market}) we only consider customers that strictly prefer company $i$ to others. Later we will see that this is not restrictive, as almost all customers choose only one company.
The boundary of company $i$'s market area is denoted by $\mathcal{BR}(i)=\mathbf{cl}(\mathcal{S}_{i})/\mathbf{int}(\mathcal{S}_{i})$.

\begin{definition} (Surviving companies)
A company $i$ is called \emph{surviving} if $S_{i}>0$.
 \end{definition}

 Similarly, surviving owners of $\mathbf{x}$, denoted by  $\mathcal{O}(\mathbf{x}, \mathbf{P})$,
is the set of companies who provide the lowest price among all surviving companies at $\mathbf{x}$,  i.e.,
\[
\mathcal{O}(\mathbf{x}, \mathbf{P})=\{i\,|\,i\in\mathbf{argmin}_{i\in \{j|S_j>0\}}\thinspace P_{i}\{\mathbf{x}\}\}.
\]

In this work, unless otherwise stated, we focus on the following general $P_i\{\mathbf{x}\}$ function:
\begin{eqnarray}
P_{i}\{\mathbf{x}\}=P_{i}+D(\mathbf{x},\mathbf{x_{i}})-\beta S_{i}^q,\quad\beta, q\geq0. \label{eq:price_def}
\end{eqnarray}
Here the first term in  $P_{i}\{\mathbf{x}\}$ is company $i$'s mill price. The second term $D(\mathbf{x},\mathbf{x_{i}})$ is a distance function, which captures customer $\mathbf{x}$'s disutility due to the difference between his preferred product feature profile and that offered by company $i$. The last term models the \emph{brand effect}, i.e., how much a company's brand name helps in attracting customers.
The parameter $q$ captures how brand is related to the market area (or market share), and $\beta$ denotes how much customers value the brand name when making purchase decisions.
By varying the values of $q$ and $\beta$, our model captures a wide range of  brand effect in a market.\footnote{Here we assume that the $\beta$ value is the same for all customers. }

The objective of each company $i$ is to choose the price $P_i$, so as to maximize its revenue, i.e.,
\begin{eqnarray}
W_{i}=P_{i}\cdot S_{i}.
\end{eqnarray}

%In the following, we carry out our investigation under the above model. \textcolor{red}{$\leftarrow$I don't quite understand}
For tractability, in this paper, we  restrict our attentions mainly to  the single-feature market $\mathcal{M}= \mathbb{R}$ (1D) and the dual-feature market $\mathcal{M}=\mathbb{R}^2$ (2D).%\footnote{While this is clearly an approximation, in practice, customers often decide their choice based on one or two dominating factors. \textbf{====it will be very nice if we could find a reference to support====}}
 We also use the common $L_2$ Euclidean norm as the distance measure, i.e., % This distance measure was also used in ==.
%\footnote{Our results extend to the case with $D(\mathbf{x},\mathbf{x_{i}})=a||\mathbf{x}-\mathbf{x_{i}}||_{2}^{2}$ for general $a>0$.}
\begin{eqnarray}
D(\mathbf{x},\mathbf{x_{i}})=||\mathbf{x}-\mathbf{x_{i}}||_{2}^{2}.
\end{eqnarray}
%We will  normalize $a$ to $1$ for simplicity.
Note that linear  and pure quadratic distance functions are commonly adopted for  market analysis, e.g., \cite{1929},  \cite{d1979hotelling}.  We also use $d_{ij}=D(\mathbf{x_{i}},\mathbf{x_{j}})$ to denote the distance between $i$ and $j$.

We remark here that our model is different from the classic Hotelling model \cite{1929}, where companies also choose their locations as part of their strategies. It  has been shown in \cite{d1979hotelling} that no equilibrium exists under that model, even with two companies with linear distance costs.
%(\textcolor{red}{what about quadratic cost??}).
 We instead focus on the case when company locations (features) are predetermined. This models markets where locations cannot be arbitrarily determined by companies.

We now define the following notions that will be used repeatedly.
%surviving companies, surviving owners of the market and boundary, neighbor of a company. Then comes definition of Nash equlibirum.
% and  the Nash equilibrium of the market as follows:
 %\textcolor{red}{(This is wrong if $\mathcal{S}_{i}$ is closed, should be closed/inner)}
\begin{definition}
\label{def:neighbor} (Neighbor)
The neighbors of $i$ are the surviving owners of $\mathcal{BR}(i)$, i.e.,  $\mathcal{NR}(i)=\{j\,|\,j\neq i,\thinspace\exists\,\mathbf{x}\in\mathcal{BR}(i), \,\text{s.t.}\,j\in\mathcal{O}(\mathbf{x},\mathbf{P})\}$.
\end{definition}
%We also define Nash equilibrium here.
\begin{definition} (Nash Equilibrium)
Given the company price vector $\mathbf{P^{*}}=(P^*_i, i\in\mathcal{N})=(P_{i}^{*},P_{-i}^{*})$, we say that the
market is at pure strategy Nash equilibrium if
\begin{eqnarray*}
P_{i}^{*}=\mathbf{argmax}_{P_{i}}W(P_{i},P_{-i}^{*}), \quad\,\forall\, i\in\mathcal{N}.
\end{eqnarray*}
That is, no company can improve its profit by changing its price unilaterally.
\end{definition}

Since we only study pure strategy Nash equilibrium,   we will use Nash equilibrium for short in the rest of the paper.
It is clear that the  \emph{Individual Rationality (IR)} condition always holds since $W_{i}=P_i\cdot S_i\geq0$ for all $i\in\mathcal{N}$.
%As a company can always set $P_i=0$ and achieve $W_i=0$, it is clear that at any equilibrium, the  \emph{Individual
%Rationality (IR)} condition holds, i.e., every company has a profit $W_{i}\geq0$.
Besides, when $\beta=0$,
every company can survive by setting a very small price and gaining the customers close to its position.
For general positive $\beta$ values, however,  some companies may have zero market area even when they adopt a zero price.
%even if they charge a price no matter how close to $0$.

%\begin{definition} (Profit secure)
%Market is called \emph{profit secure} if the $\beta$ value  ensures that $P_i\{\mathbf{x}_j\}>0$ for all  $\mathbf{P}\in [0, P_{upper}]^N$,  $\forall i, j\in\mathcal{N}$ and $i \neq j$.
%\end{definition}
%
%Such secure zone of $\beta$ always exists. As we stated above, the market is secure if  $\beta=0$. Since $S_i<\infty, \forall i$ under general $\beta$ values (this will be proven in Lemma \ref{lemma:2owner}), we know that market is secure  at least for $\beta \in[0, \frac{\mathrm{min}_i \|\mathbf{x}_i-\mathbf{x}_j\|^2}{\mathrm{max}_k S_k}), \forall i,k\in \mathcal{N}$ .

%\textbf{====we need to state why we need this definition. Also, do we need to put it here? Or is it only needed for the proof? ====}
%For a better understanding in the following theorems, we now define the notion of a potential competitor.
%We now also
%Due to the brand effect, companies that are not directly competing with each other can also become neighbors if
%Companies in set $\mathcal{NR}(i)$ are the direct competitors of $i$. Companies that are not neighbors of $i$ can also compete with $i$, which means  that it may become a competitor to $i$ with a small deviation of $i$'s strategy.
Companies compete directly with their neighbors on market share. Meanwhile, we define the notion of a potential competitor, which will be useful for our later analysis.
\begin{definition}\label{def:potential} (Potential competitor)
Company $j$ is called  a \emph{potential competitor} of company $i$ if it satisfies any one of the following conditions:
\begin{enumerate}
\item[(a)] If $S_j=0$,  there exists $\mathbf{x}\in \mathbf{cl}(\mathcal{S}_i)$, such that $P_j\{\mathbf{x}\}=P_i\{\mathbf{x}\}$.
\vspace{-0.1in}
\item[(b)] If $j\in\mathcal{NR}(i)$, $\frac{|\mathbf{cl}(\mathcal{S}_j)\cap\mathbf{cl}(\mathcal{S}_i)|}{|\mathcal{BR}(i)|}=0$.
\end{enumerate}
\end{definition}
%Condition 1 reflects the situation that a non-surviving company $j$  who provides the same price at its local coordinate.
In the case when Condition (a) holds,  with an increment in $P_i$, $j$ may survive and become an actual competitor of $i$. Condition (b) can appear in the 2D market where $j$ is a neighbor of $i$, yet the length of their border line is zero, i.e., their market area only intersect at one point. Thus, $i$ may start  facing a direct competition from $j$ if $P_i$ increases.
The existence of potential competitors captures the general dynamics of different markets. However, it also greatly complicates the problem and requires a very different analysis from previous work.
%This fact that competitors can dramatically change under price deviation is
%Note that under our model, with a small price change, companies can easily become neighbors of each other and directly compete. This adds great complexity to the analysis.

%At the end of preliminaries, we need to review
Last, we have the following  theorem from  \cite{glicksberg1952further} \cite{debreu1952social} \cite{fan1952fixed}, which will be used later for proving the existence of Nash equilibrium. %, which will be used for establishing our existence results later.
\newtheorem{theorem}{Theorem}
\begin{theorem}
[\cite{glicksberg1952further}\cite{debreu1952social}\cite{fan1952fixed}] For a game with compact convex strategy space, if each player $i$'s payoff function $W_i(\mathbf{P})$ is continuous in $\mathbf{P}$ and quasiconcave in $P_i$, there exists a pure strategy Nash equilibrium. \label{thm:1952}
\end{theorem}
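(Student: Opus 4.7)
The plan is to derive the theorem from Kakutani's fixed point theorem applied to the best-response correspondence. For each player $i$, define $BR_i(P_{-i}) = \mathbf{argmax}_{P_i}\, W_i(P_i, P_{-i})$, and then the joint correspondence $BR(\mathbf{P}) = \prod_{i\in\mathcal{N}} BR_i(P_{-i})$ from the compact convex joint strategy space into itself. A fixed point $\mathbf{P^*} \in BR(\mathbf{P^*})$ is by definition a pure strategy Nash equilibrium, so it suffices to verify Kakutani's hypotheses for $BR$.

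First I would check that $BR_i(P_{-i})$ is non-empty: since $W_i$ is continuous in $\mathbf{P}$ and player $i$'s strategy set $[0, P_{upper}]$ is compact, Weierstrass's theorem produces a maximizer. Next I would check convexity of $BR_i(P_{-i})$: by quasiconcavity of $W_i(\cdot, P_{-i})$, every super-level set is convex, so in particular the argmax --- which is the top super-level set --- is a convex subset of $[0, P_{upper}]$. Finally I would verify that $BR$ has a closed graph, equivalently that it is upper hemicontinuous with compact values; this follows from Berge's maximum theorem applied to the parametric family $\max_{P_i} W_i(P_i, P_{-i})$, using joint continuity of $W_i$ in $\mathbf{P}$ together with the fact that each player's feasible set is fixed (hence trivially continuous in the opponents' actions).

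With non-emptiness, convexity, and closed-graph in hand, Kakutani's theorem applied to $BR$ on the product of compact convex intervals produces the desired fixed point, which is a pure strategy Nash equilibrium. The statement itself is classical --- cited here directly from \cite{glicksberg1952further,debreu1952social,fan1952fixed} --- so the proof poses no genuine obstacle. The real difficulty, already flagged in the introduction, will lie not in proving Theorem \ref{thm:1952} but in \emph{applying} it later in this paper: the payoff $W_i = P_i \cdot S_i$ is only piecewise continuous in $\mathbf{P}$ because $S_i$ can jump as the neighbor set $\mathcal{NR}(i)$ changes with prices, so verifying the continuity and quasiconcavity hypotheses of this theorem for the concrete game at hand will require dedicated arguments beyond the classical sketch above.
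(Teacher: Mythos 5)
Your proof is correct: the Kakutani argument via the best-response correspondence (non-emptiness from Weierstrass, convexity of the argmax from quasiconcavity, closed graph from joint continuity / Berge) is the standard and valid derivation of this result. Note, however, that the paper itself offers no proof of this theorem --- it is imported verbatim from the cited references \cite{glicksberg1952further}, \cite{debreu1952social}, \cite{fan1952fixed} --- so there is nothing to compare against; you have simply supplied the classical argument the authors take for granted, and you are right that the substantive work in the paper lies in verifying continuity and quasiconcavity of $W_i = P_i \cdot S_i$ for the specific market game, not in this existence theorem.
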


\section{Market Area and Ownership Analysis} \label{section:market_structure}
We now present  our results for market area and ownership. Unless otherwise stated, results in this section apply to general $K$-dimensional markets.
\newdef{lemma}{Lemma}
\begin{lemma} \label{lemma:one-survive}
For any customer $\mathbf{x}$ in the market , there exists at least one surviving company who owns him, i.e., $|\mathcal{O}(\mathbf{x}, \mathbf{P})|>0$ for all $\mathbf{x}\in \mathcal{M}$.
\end{lemma}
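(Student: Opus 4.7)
The plan is to show that for every customer $\mathbf{x} \in \mathcal{M}$, the aggregate price $P_i\{\mathbf{x}\}$ is minimized by at least one surviving company. The argument proceeds in two stages: first, reducing the set of effective competitors at $\mathbf{x}$ to a finite one; second, showing that at least one of the resulting price-minimizers has positive market area.

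For the first stage, I would exploit the quadratic growth of $D(\mathbf{x}, \mathbf{x}_i) = \|\mathbf{x} - \mathbf{x}_i\|_2^2$ to restrict attention to a finite set of companies near $\mathbf{x}$. Mill prices lie in $[0, P_{upper}]$, and the brand term $\beta S_i^q$ is uniformly bounded across all companies because the density assumption (every point of $\mathcal{M}$ has a company within some finite radius) caps each company's market area by a Voronoi-type bound. Fixing any company close to $\mathbf{x}$ yields an upper bound $M$ on the minimum aggregate price at $\mathbf{x}$, and any company whose distance from $\mathbf{x}$ is large enough cannot achieve this minimum. The remaining candidates lie in a bounded ball and are thus finitely many, so $\mathcal{O}'(\mathbf{x}, \mathbf{P})$ is nonempty and finite.

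For the second stage, let $\mathcal{T} = \mathcal{O}'(\mathbf{x}, \mathbf{P})$. Companies outside $\mathcal{T}$ have strictly higher aggregate price at $\mathbf{x}$, and by continuity of $\mathbf{y} \mapsto P_j\{\mathbf{y}\}$ this strict ordering persists on an open neighborhood $U$ of $\mathbf{x}$, so only members of $\mathcal{T}$ can own any point of $U$. For any $i, j \in \mathcal{T}$, the indifference locus
\[
\{\mathbf{y} : P_i\{\mathbf{y}\} = P_j\{\mathbf{y}\}\} = \{\mathbf{y} : \|\mathbf{y}-\mathbf{x}_i\|_2^2 - \|\mathbf{y}-\mathbf{x}_j\|_2^2 = P_j - P_i + \beta(S_i^q - S_j^q)\}
\]
is an affine hyperplane, because the quadratic term $\|\mathbf{y}\|_2^2$ cancels in the difference of squared norms. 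Hence $U$ is partitioned by finitely many hyperplanes into convex polyhedral cells, each labeled by a company that is the unique minimizer on the cell's interior. Since $U$ has positive Lebesgue measure and the hyperplane boundaries have measure zero, at least one cell has nonempty interior; the corresponding company $i^{*}$ therefore satisfies $S_{i^{*}} > 0$, is surviving, and belongs to $\mathcal{O}(\mathbf{x}, \mathbf{P})$, which proves the lemma.

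The main obstacle is the first stage: justifying that $\beta S_i^q$ is uniformly bounded across the unbounded market $\mathcal{M}$. Without such a bound, a far-away company with an enormous market area could in principle offer the lowest aggregate price at $\mathbf{x}$ via a highly negative brand contribution, and the argmin could fail to be attained. Leveraging the standing assumption on dense placement of companies resolves this, since it forces each individual market area to be bounded by the geometry of its local competitors; once this is secured, the hyperplane-partition argument in the second stage is essentially geometric and routine.
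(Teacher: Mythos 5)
Your proof is correct and follows essentially the same route as the paper's: both localize to a small ball around $\mathbf{x}$ on which only the aggregate-price minimizers at $\mathbf{x}$ can own customers, and then conclude from the positive volume of that ball that at least one of those minimizers must have positive market area. The paper argues by contradiction and simply asserts finiteness of the relevant competitor set, whereas you additionally spell out the reduction to finitely many candidates via the quadratic distance term and the hyperplane-cell decomposition needed because $S_i$ counts only uniquely owned points --- these are refinements of, not departures from, the paper's argument.
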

%
%\begin{lemma} \label{lemma:one-point}
%For any company $i$ with $S_{i}=0$, let
%$\mathcal{S}'_{i}=\{\mathbf{x}\,|\,i\in\mathcal{O}'(\mathbf{x}, \mathbf{P})\}$, i.e., the set of customers for whom company $i$ offers the lowest price. Then,
%%$|\mathcal{S}'_{i}|\leq1$, i.e.,
%$\mathcal{S}'_{i}$ contains at most one point $\mathbf{x}$ in the market.
%\end{lemma}
%

Lemma \ref{lemma:one-survive} shows that $\mathcal{O}(\mathbf{x}, \mathbf{P})\subseteq\mathcal{O}'(\mathbf{x}, \mathbf{P})$.
Thus, each customer has at least one surviving owner.
%Combining it with Lemma \ref{lemma:one-point},   we  can see that  the ``bankrupted'' companies do not really take up any market area and their impact  to the market is very limited. %Lemma \ref{lemma:potential} provides the relation between $i$ and his potential competitors.

We now have the first main theorem of the paper:
\begin{theorem}\label{thm:convex}
For any surviving company $i$ in the market,  $\mathcal{S}'_{i}=\{\mathbf{x}\,|\,i\in\mathcal{O}'(\mathbf{x}, \mathbf{P})\}$ is a convex polyhedron, and  $\mathcal{S}'_{i}=\mathbf{cl}(\mathcal{S}_{i})$.
\end{theorem}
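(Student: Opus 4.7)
The plan is to exploit the fact that with $D(\mathbf{x},\mathbf{x}_i)=\|\mathbf{x}-\mathbf{x}_i\|_2^2$, a pairwise price comparison turns into a \emph{linear} inequality in $\mathbf{x}$, which immediately gives a half-space structure. Then the two claims follow from standard convex-geometry facts, with the ``surviving'' hypothesis used exactly once to secure a non-empty interior.

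First I would write out, for any two companies $i,j\in\mathcal{N}'$, the condition $P_i\{\mathbf{x}\}\le P_j\{\mathbf{x}\}$ using the definition in (\ref{eq:price_def}). The $\|\mathbf{x}\|_2^2$ terms cancel, yielding an inequality of the form $2(\mathbf{x}_j-\mathbf{x}_i)\cdot\mathbf{x}\le \|\mathbf{x}_j\|_2^2-\|\mathbf{x}_i\|_2^2+(P_j-\beta S_j^q)-(P_i-\beta S_i^q)$, i.e., a closed half-space $H_{ij}\subseteq\mathcal{M}$ whose coefficients depend only on the fixed data $(\mathbf{x}_i,\mathbf{x}_j,P_i,P_j,S_i,S_j)$. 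By the definition of $\mathcal{O}'$, I can then write
\begin{equation*}
\mathcal{S}'_i \;=\; \bigcap_{j\in\mathcal{N}',\,j\neq i} H_{ij},
\end{equation*}
so $\mathcal{S}'_i$ is an intersection of closed half-spaces, hence a (closed) convex polyhedron. This settles the first part.

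For the identity $\mathcal{S}'_i=\mathbf{cl}(\mathcal{S}_i)$, I would analyze the strict analogue: letting $H_{ij}^\circ$ be the open half-space with strict inequality, $\mathcal{S}_i=\bigcap_{j\neq i}H_{ij}^\circ$ is the set where $i$ is the \emph{unique} minimizer, so it is open and clearly satisfies $\mathcal{S}_i\subseteq\mathbf{int}(\mathcal{S}'_i)$. Conversely, if $\mathbf{x}\in\mathbf{int}(\mathcal{S}'_i)$ but some defining inequality were tight at $\mathbf{x}$, perturbing $\mathbf{x}$ along the outward normal of that facet would leave $\mathcal{S}'_i$, a contradiction; hence $\mathbf{int}(\mathcal{S}'_i)\subseteq\mathcal{S}_i$. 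Thus $\mathcal{S}_i=\mathbf{int}(\mathcal{S}'_i)$. Now I invoke the surviving hypothesis: $S_i>0$ forces $\mathcal{S}_i\neq\emptyset$, so the closed convex set $\mathcal{S}'_i$ has non-empty interior. The standard convex-geometry fact that a closed convex set with non-empty interior equals the closure of its interior then gives $\mathcal{S}'_i=\mathbf{cl}(\mathbf{int}(\mathcal{S}'_i))=\mathbf{cl}(\mathcal{S}_i)$, as required.

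The main obstacle I anticipate is purely a bookkeeping one: the index set $\mathcal{N}'$ is infinite, so ``polyhedron'' has to be read as ``intersection of closed half-spaces'' rather than the finite-facet version. The assumption that companies are placed all over $\mathcal{M}$ with fixed outside prices guarantees that the intersection is still locally cut out by finitely many half-spaces near any point of $\mathcal{S}'_i$ (far-away competitors cannot be binding), so no pathology arises. A second small care-point is the strict-to-weak limit argument in $\mathbf{cl}(\mathcal{S}_i)\subseteq\mathcal{S}'_i$, which is immediate from continuity of each $P_j\{\cdot\}$; the nontrivial direction is the one handled above using the non-empty interior.
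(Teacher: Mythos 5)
Your proposal is correct and rests on the same central observation as the paper's proof: because $D$ is the squared Euclidean norm, the $\left\Vert\mathbf{x}\right\Vert_2^2$ terms cancel in any pairwise comparison $P_i\{\mathbf{x}\}\leq P_j\{\mathbf{x}\}$, so $\mathcal{S}'_i$ is an intersection of closed half-spaces and hence a convex polyhedron. Where you genuinely diverge is the second claim. The paper establishes $\mathcal{S}'_{i}=\mathbf{cl}(\mathcal{S}_{i})$ by two direct elementwise inclusions (closure of the open intersection sits inside $\mathcal{S}'_i$ by continuity, and conversely by excluding the complement), and in doing so never actually invokes the hypothesis $S_i>0$. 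You instead identify $\mathcal{S}_i=\mathbf{int}(\mathcal{S}'_i)$ and appeal to the standard fact that a closed convex set with non-empty interior equals the closure of its interior. This buys two things: it isolates exactly where ``surviving'' is needed (it rules out the degenerate case where $\mathcal{S}'_i$ is a lower-dimensional face on which $i$ only ever ties, so that $\mathbf{cl}(\mathcal{S}_i)=\emptyset\neq\mathcal{S}'_i$ and the identity would fail), and your local-finiteness remark handles the infinite index set $\mathcal{N}'$, which the paper silently truncates to the finite set $\mathcal{N}$. One small care-point in your perturbation step: the outward normal $2(\mathbf{x}_j-\mathbf{x}_i)$ vanishes if two companies share a location, so the ``facet'' degenerates; but in that case $P_i\{\cdot\}-P_j\{\cdot\}$ is constant and a tie would make $i$ nowhere the unique owner, contradicting $S_i>0$, so one sentence closes this.
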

\begin{proof}
Consider the surviving companies.  We only show the case in 2D, since situations in other dimensions are similar. First note that the set
of market area where company $i$ and $j$ provide the same price is a
straight line (hyperplane in high dimension) given by:
\begin{eqnarray*}
\mathcal{L}_{i,j}&&=\{(x,y)|P_{i}+(x-x_{i})^{2}+(y-y_{i})^{2}-\beta S_{i}\\
&&\qquad-[P_{j}+(x-x_{j})^{2}+(y-y_{j})^{2}-\beta S_{j}]=0\}\\
&&=\{(x,y)|2x(x_{i}-x_{j})+2y(y_{i}-y_{j})-P_{i}+P_{j}\\
&&\qquad-\beta S_{j}+\beta S_{i}+x_{j}^{2}+y_{j}^{2}-x_{i}^{2}-x_{j}^{2}=0\}
\end{eqnarray*}
 This line divides the plane into 2 open half space $\mathcal{H}_{i}(i,j)$, $\mathcal{H}_{j}(i,j)$,
where in $\mathcal{H}_{i}(i,j)$ we have $P_{i}\{(x,y)\}<P_{j}\{(x,y)\}$
and in $\mathcal{H}_{j}(i,j)$ we have $P_{j}\{(x,y)\}<P_{i}\{(x,y)\}$.
Thus, for any point $\mathbf{x}=(x,y)\in\mathcal{H}_{i}(i,j)$, we have
$j\notin\mathcal{O}(\mathbf{x})$ and vice versa. Let $\mathcal{U}_{i}=\cap_{j\in\mathcal{N},j\neq i}\mathcal{H}_{i}(i,j)$,
since $\mathcal{U}_{i}$ is the intersection of half
spaces, $\mathcal{U}_{i}$ is a convex polygon.

Now we prove that $\mathcal{U}_{i}=\mathcal{S}_{i}$. For
any $\mathbf{x}\in\mathcal{S}_{i}$, we have $\mathcal{O}(\mathbf{x})=\{i\}$.
Hence,  $\mathbf{x}\in\mathcal{H}_{i}(i,j),\forall j\neq i$. Thus, $\mathbf{x}\in\cap_{j\in\mathcal{N},j\neq i}\mathcal{H}_{i}(i,j)=\mathcal{U}_{i}$ and
$\mathcal{S}_{i}\in\mathcal{U}_{i}$. For any $\mathbf{x}\in\mathcal{U}_{i}$,
we see that $j\notin\mathcal{O}(\mathbf{x})$ if $j\neq i$. Since $\mathcal{O}(\mathbf{x})\neq\emptyset$, $\mathcal{O}(\mathbf{x})=\{i\}$, $\mathcal{U}_{i}\in\mathcal{S}_{i}$ and $\mathcal{U}_{i}=\mathcal{S}_{i}$.

It remains to show that $\mathcal{S}'_{i}=\mathbf{cl}(\mathcal{U}_{i})$.
For any $\mathbf{x}\in\mathbf{cl}(\mathcal{U}_{i})$, we have $P_{i}\{\mathbf{x}\}\leq P_{j}\{\mathbf{x}\},\forall j\neq i$. Hence, $i\in\mathcal{O}(\mathbf{x})$ and $\mathbf{x}\in\mathcal{S}'_{i}$, which implies
$\mathbf{cl}(\mathcal{U}_{i})\in\mathcal{S}'_{i}$. Now consider any $\mathbf{x}\in\mathcal{S}'_{i}$.
Since for all $\mathbf{x}'\in\mathbf{cl}(\mathcal{U}_{i}^{C})$,
where $\mathbf{cl}(\mathcal{U}_{i}^{C})=\mathcal{M}/\mathbf{cl}(\mathcal{U}_{i})$
is the complementary set of $\mathbf{cl}(\mathcal{U}_{i})$, we have
$i\notin\mathcal{O}(\mathbf{x}')$. Thus, we must have $\mathbf{x}\in\mathbf{cl}(\mathcal{U}_{i})$ and $\mathcal{S}'_{i}\in\mathbf{cl}(\mathcal{U}_{i})$.

In conclusion, $\mathcal{S}'_{i}=\mathbf{cl}(\mathcal{U}_{i})$,
and $\mathcal{S}'_{i}=\mathbf{cl}(\mathcal{S}{}_{i})$.
\end{proof}
%%lemma 2 owners
%%%%%%%%%%%%%
%\begin{lemma}\label{lemma:2owner}
%In the 1D market,  every customer $x$ has at most 2 surviving owners.
%\end{lemma}
%
%%lemma one intverval
%%%%%%%%%%%%%%%%%%%%%%
%\begin{lemma}
%\label{lemma:one-interval}
%In the 1D market,   each surviving  company's market consists of only one bounded continuous interval.
%\end{lemma}
%From Theorem \ref{thm:convex} we can see that every customer $x$ has at most 2 surviving owners and only customers at the boundaries will have 2 surviving owners.
%So far we have investigated the market structure: the market  looks like a tessellation graph, which is divided and filled by polyhedrons.
%So far we have a clear picture of how the market will look like under quadratic distance function and exponentiate brand effect item. The  market looks like a tessellation graph, which is divided and filled by polyhedrons. The straight border line due to quadratic distance simplifys the analysis, compare to hyperbola border in case where linear distance is adopted.
From Theorem \ref{thm:convex} we know that in the 1D market, each surviving  company's market consists of only one bounded continuous interval, while in the 2D market, it consists of a polygon.

%Let us label only the surviving companies as $\mathcal{N}_S(\mathbf{P})=\{s_1, s_2, ...,s_{N_S}\}$ in increasing order of their coordinates, i.e., $x_{s_1}<x_{s_2}<...<x_{N_S}$.
%%Hence, $x_{s_1}<x_{s_2}<...<x_{s_{N_S}}$, where $N_S=|\mathcal{N}_S(\mathbf{P})|$ is the number of surviving companies.
%Note that this does not automatically guarantee that the market area of company $i$ will be next to the market area of company $i+1$. Luckily, Lemma \ref{lemma:neighbor} below  shows that for the surviving companies, the left neighbor of $s_i$ will be $s_{i-1}$ and its right neighbor is $s_{i+1}$, i.e., it turns out that companies market area are sequentially placed and ordered by their position coordinates.
%That is, surviving companies only compete with there direct neighbors w.r.t their coordinates.
%%
%Notice that this

%lemma  one by one
%%%%%%%%%%%%%%%%%

\section{Market Equilibrium when $q=0$} 	\label{section:q=0}
In this section, we present our results  for the 1D  and 2D markets with $q=0$, i.e., the brand names do not affect customers.   In this case,  the aggregate price of company $i$ simplifies to:
\begin{eqnarray}
P_{i}\{\mathbf{x}\}=P_{i}+||\mathbf{x}-\mathbf{x_{i}}||_{2}^{2}-\beta,\quad\beta\geq0. \label{eq:p-q0}
\end{eqnarray}
%Complete analysis on both existence and properties of Nash equilbirum  will be given in the following two subsections.
%discussed in 1D and 2D cases respectively.
%especially, when in 1D market we let  $d_i=x_{i+1}-x_{i}$. We also use $l_{ij}$ to denote the length of border line between $i$ and $j$.

\subsection{Single-feature (1D) market}
In this case the market is an infinite  line and $\mathcal{N}$ denotes the set of companies in some arbitrarily chosen nonempty interval $[- B/2, B/2]$, with finite number of  companies $N>0$.
For simplicity, we denote $d_i\triangleq d_{i,i+1}= x_{i+1}-x_i$.
From Theorem \ref{thm:convex}, we can use $\mathcal{S}_i=(L_i, R_i)$ to denote each surviving company $i$'s market area, where $L_i\triangleq \inf\{x: x\in \mathcal{S}_i\}$ and $R_i\triangleq\sup\{x: x\in \mathcal{S}_i\}$ are the  boundary  points.
\begin{theorem}\label{thm:1d-q0-exist}
In the 1D market, Nash equilibrium always exists when $q=0$.
\end{theorem}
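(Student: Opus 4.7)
The plan is to verify the hypotheses of Theorem~\ref{thm:1952}. The strategy space $[0,P_{upper}]$ is already compact and convex, so it suffices to show that each revenue $W_i(\mathbf{P})=P_i\cdot S_i(\mathbf{P})$ is (a) jointly continuous in $\mathbf{P}$ and (b) quasiconcave in $P_i$ for every fixed $\mathbf{P}_{-i}$. Since $q=0$ the brand term $-\beta$ is a constant that cancels in every pairwise comparison, so by Theorem~\ref{thm:convex} each surviving $i$'s market is an interval $(L_i,R_i)$ whose endpoints are given by the familiar midpoint-plus-price-difference formula relative to $i$'s current effective left and right neighbors.

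Fix $\mathbf{P}_{-i}$ and view $S_i$ as a function of $P_i\in[0,P_{upper}]$. The core structural claim is that $S_i(P_i)$ is continuous, non-increasing, piecewise linear, and concave, with $S_i\equiv 0$ on $[P_i^{*},P_{upper}]$ for some death price $P_i^{*}$ (possibly larger than $P_{upper}$). As $P_i$ grows, $i$'s market can only shrink and its live neighbors' markets can only grow, so live neighbors never die. The only structural events on the increasing ray are (i)~a currently dead firm $j$ sitting between $i$ and a current effective neighbor $k$ (Condition~(a) of Definition~\ref{def:potential}) activating, and (ii)~$i$ itself dying. At an event of type (i), the new effective neighbor $j$ is strictly closer to $i$ than $k$ was, so the slope $\partial R_i/\partial P_i=-1/(2d_{i,\mathrm{eff}})$ jumps to a strictly more negative value; symmetrically on the left. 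Hence the slope of $S_i$ is monotonically non-increasing in $P_i$, which gives concavity on $[0,P_i^{*}]$.

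Continuity across an activation is the delicate step. The naive threshold $P_i+d_{ij}^2=P_j$ would cause $j$'s nascent market to open with positive measure and create a downward jump in $S_i$. The fix is that the true activation price $P_i^{**}$ is the smaller value at which the would-be $i$-$j$ boundary $x_{ij}$ first meets the (price-in-$P_i$-independent) $j$-$k$ boundary $x_{jk}$. A direct algebraic identity, obtained by substituting the three boundary formulas and simplifying, shows that at $P_i=P_i^{**}$ the old $i$-$k$ boundary $x_{ik}$ also coincides with $x_{ij}=x_{jk}$; thus $j$'s interval grows continuously from zero as $P_i$ moves past $P_i^{**}$, and $S_i$ has no jump. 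The same boundary formulas, regarded jointly in all coordinates of $\mathbf{P}$, are themselves continuous, yielding joint continuity of $W_i(\mathbf{P})$.

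For quasiconcavity in $P_i$, write $S_i=a_k-c_kP_i$ on the $k$-th linear piece with $c_{k+1}>c_k$, so $W_i=P_i(a_k-c_kP_i)$ is concave quadratic on each piece. The continuity relation $a_k-c_kP_i^{(k)}=a_{k+1}-c_{k+1}P_i^{(k)}$ at the transition $P_i^{(k)}$ implies that the right derivative of $W_i$ is less than its left derivative there by exactly $(c_{k+1}-c_k)P_i^{(k)}\geq 0$. Therefore $W_i'$ is non-increasing throughout $[0,P_i^{*}]$, so $W_i$ is concave there with $W_i(0)=W_i(P_i^{*})=0$, and $W_i\equiv 0$ on $[P_i^{*},P_{upper}]$; this makes $W_i$ unimodal, and hence quasiconcave. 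Theorem~\ref{thm:1952} then delivers the pure-strategy Nash equilibrium. The main obstacle, and the place where the proof needs real care, is pinpointing $P_i^{**}$ as the correct activation price and verifying the three-boundary coincidence identity that rules out a jump in $S_i$; once that is in hand, the rest is concave-quadratic bookkeeping.
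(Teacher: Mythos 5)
Your proof is correct, and it reaches the conclusion by the same high-level route as the paper --- verifying continuity and quasiconcavity of $W_i$ and invoking Theorem~\ref{thm:1952} --- but via a genuinely more careful decomposition. The paper's own proof asserts that, because each firm \emph{could} choose a price making $S_i>0$, ``every company will survive,'' and then treats $i$'s neighbors as permanently being $i-1$ and $i+1$, so that $W_i(P_i)$ is a single downward parabola on one interval and zero elsewhere. That shortcut is not fully justified: for an arbitrary fixed $P_{-i}\in[0,P_{upper}]^{N-1}$ the adjacent firms may be dead, $i$'s effective neighbors may then be farther firms, and dead firms lying between $i$ and an effective neighbor can activate as $P_i$ rises, so $W_i$ is in general only piecewise parabolic. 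You confront exactly this: you show $S_i$ is continuous (via the three-boundary coincidence $x_{ij}=x_{jk}=x_{ik}$ at the activation price, which follows immediately from transitivity of the equal-price conditions), piecewise linear with slope $-\tfrac{1}{2d_{\mathrm{left}}}-\tfrac{1}{2d_{\mathrm{right}}}$ that only becomes more negative as closer neighbors activate, hence concave; and you then verify that the left-to-right drop of $W_i'$ at each junction is $(c_{k+1}-c_k)P_i^{(k)}\ge 0$, giving concavity of $W_i$ on $[0,P_i^{*}]$ and quasiconcavity overall. This is essentially the machinery the paper only deploys later for the 2D case (Theorem~\ref{theorem:2d-exist}) and the $q=1$ case (Theorem~\ref{thm:1d-q1-exist}), imported back into the 1D $q=0$ setting where the paper elides it; your version is therefore both correct and strictly more complete than the published argument, at the cost of the extra bookkeeping around activation events.
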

Theorem \ref{thm:1d-q0-exist} is proved by showing the utility function of a company is exactly  a parabola.
Before proving the theorem, we first have the following lemma.
\begin{lemma}\label{lemma:neighbor}
In the 1D market,  let $x_i$ denote the i-th surviving company in increasing order of their coordinates.  We have  $L_{i} = R_{i-1}$ for all $i \in \mathcal{N}_S(\mathbf{P})$.
%the left neighbor of $i$ is $i-1$ andthe right neighbor of $i$ is $i+1$.
\end{lemma}

\begin{proof} (Theorem \ref{thm:1d-q0-exist})
From (\ref{eq:p-q0}) we see that when $q=0$, the brand effect factor will be the same to every companies, regardless of the $\beta$ value. Thus, we consider $\beta=0$.
%the result for $\beta=0$ applies to general $\beta\geq0$.

%Let $\beta=0$.
For any companies $i$, other companies' aggregate prices at $\mathbf{x_i}=x_i$ are always positive due to distance.
Hence, for any companies $i$, given $P_{-i}$, $i$ can always set a price $P_i$ such that $S_i>0$. Therefore, every company will survive in the market.

For company $i$, by Theorem \ref{thm:convex} and Lemma \ref{lemma:neighbor}, we know that $i$'s market area will be an interval on the line, with neighbors  always being $i-1, i+1$, which satisfy $x_{i-1}<x_i<x_{i+1}$. The right boundary of $i-1$ and $i$ can be calculated as:
\begin{eqnarray*}
P_{i}+(R_{i-1}-x_{i})^{2}&=&P_{i-1}+(R_{i-1}-x_{i-1})^{2} \\
P_{i}+(R_{i}-x_{i})^{2}&=&P_{i+1}+(R_{i}-x_{i+1})^{2}
\end{eqnarray*}
Thus,   the  utility  $W_i$ is given by:
\begin{align}
W_i= &P_i S_i = P_i(R_i-R_{i-1})\nonumber\\
=&-P_{i}^{2}(\frac{1}{2d_{i}}+\frac{1}{2d_{i-1}})\nonumber\\
&+P_{i}(\frac{P_{i+1}+x_{i+1}^{2}-x_{i}^{2}}{2d_{i}}-\frac{P_{i-1}-x_{i}^{2}+x_{i-1}^{2}}{2d_{i-1}}).
\end{align}
Hence, the profit will be a downward parabola for $P_i\in[0,\frac{\frac{P_{i+1}+x_{i+1}^{2}-x_{i}^{2}}{2d_{i}}-\frac{P_{i-1}-x_{i}^{2}+x_{i-1}^{2}}{2d_{i-1}}}{\frac{1}{2d_{i}}+\frac{1}{2d_{i-1}}}]$ and zero otherwise.

The payoff of $i$ is continuous in $\mathbf{P}$, quasiconcave in $P_i$. From theorem  \ref {thm:1952}, Nash equilibrium always exists.
\end{proof}

\begin{theorem} \label{thm:1d-q0-condition}
In the 1D market with $q=0$, when market is at equilibrium, we must have:
\begin{equation}
P_i=\frac{2d_{i}d_{i-1}}{d_i+d_{i-1}}S_i, \forall i\in \mathcal{N}.
\end{equation}
\end{theorem}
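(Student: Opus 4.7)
The plan is to exploit the explicit quadratic form of $W_i(P_i)$ that was already derived in the proof of Theorem \ref{thm:1d-q0-exist} and simply read off its vertex. The key observation is that $W_i = P_i S_i$, so the first-order condition at an interior optimum reads
\[
\frac{\partial W_i}{\partial P_i} = S_i + P_i\,\frac{\partial S_i}{\partial P_i} = 0,
\]
which rearranges to $P_i = -S_i/(\partial S_i/\partial P_i)$. Hence the theorem will follow as soon as $\partial S_i/\partial P_i$ is computed explicitly.

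First I would solve the two boundary equations from the proof of Theorem \ref{thm:1d-q0-exist} for $R_{i-1}$ and $R_i$. Each equation is quadratic in the boundary coordinate but a squared-difference factorization makes it linear, yielding the affine expressions
\[
R_{i-1} = \frac{x_{i-1}+x_i}{2} + \frac{P_i - P_{i-1}}{2d_{i-1}}, \qquad R_i = \frac{x_i+x_{i+1}}{2} + \frac{P_{i+1} - P_i}{2d_i}.
\]
By Lemma \ref{lemma:neighbor} every surviving company in $\mathcal{N}$ satisfies $L_i = R_{i-1}$, so $S_i = R_i - R_{i-1}$ is affine in $P_i$ with slope
\[
\frac{\partial S_i}{\partial P_i} = -\frac{1}{2d_i} - \frac{1}{2d_{i-1}} = -\frac{d_i+d_{i-1}}{2 d_i d_{i-1}}.
\]
Substituting this into the first-order condition gives $P_i = \tfrac{2 d_i d_{i-1}}{d_i + d_{i-1}}\,S_i$, which is the claimed identity.

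The only side point requiring care is that the first-order condition genuinely applies, i.e., the equilibrium price lies strictly inside $[0, P_{upper}]$ and both geometric neighbors $i-1, i+1$ used above are themselves surviving. Interiority follows from the proof of Theorem \ref{thm:1d-q0-exist}: $W_i$ is a downward parabola that vanishes at $P_i=0$ and at the price where $S_i$ first drops to $0$, and every company can ensure $S_i > 0$ by charging some small positive price, so the maximizer lies strictly between these two zeros and below $P_{upper}$ once that bound is chosen large enough. Survival of the immediate neighbors follows from the same survival argument together with the assumption that companies populate the entire line. I do not anticipate a serious obstacle here; the content reduces to a one-line derivative computation once the boundary points are made explicit.
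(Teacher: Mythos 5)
Your proposal is correct and follows essentially the same route as the paper: the first-order condition $S_i + P_i\,\mathrm{d}S_i/\mathrm{d}P_i = 0$ combined with the explicit slope $\mathrm{d}S_i/\mathrm{d}P_i = -\bigl(\tfrac{1}{2d_i}+\tfrac{1}{2d_{i-1}}\bigr)$ obtained from the two boundary equations. Your version is slightly more careful (you make the boundary coordinates explicit, get the sign of the slope right where the paper's displayed derivative omits the minus sign, and justify interiority of the maximizer), but the substance is identical.
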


\begin{proof}
When the market is at equilibrium,  we have:
\begin{equation}\nonumber
\frac{\mathrm{d}W_{i}}{\mathrm{d}P_{i}}=S_{i}+P_{i}\frac{\mathrm{d}S_{i}}{\mathrm{d}P_{i}}=0
\end{equation}
and
\begin{equation}\nonumber
\frac{\mathrm{d}S_{i}}{\mathrm{d}P_{i}}=\frac{1}{2d_{i}}+\frac{1}{2d_{i-1}}.
\end{equation}
Combining these two equations proved the theorem.
%\begin{equation}
%\frac{P_i}{S_i}=\frac{2d_{i}d_{i-1}}{d_i+d_{i-1}}, \forall i\in \mathcal{N}
%\end{equation}
\end{proof}

Theorem \ref{thm:1d-q0-condition} tells us that a company's price is proportional to its market area at equilibrium.
The coefficient $\frac{2d_{i}d_{i-1}}{d_i+d_{i-1}}$ is a constant for one company since their locations are fixed. Denote $\frac{1}{\gamma}=\frac{2d_{i}d_{i-1}}{d_i+d_{i-1}}$, or  $\gamma=\frac{1}{2}(\frac{1}{d_{i}}+\frac{1}{d_{i-1}})$. Notice that with bigger $d_i,d_{i-1}$ values, we will have a smaller $\gamma$, which implies higher equilibrium prices with the same market area. Therefore, $\gamma$ can be viewed as competition intensity, i.e., farther distance between companies mitigates the competition and increase company profit.
This is similar to the  maximal differentiation principle \cite{d1979hotelling}, which means that companies should not choose similar positions in the market, i.e. larger $d_i$ values.
The simple form in Theorem \ref{thm:1d-q0-condition} that equilibrium price is determined by market area over competition intensity appears to match our intuition that companies with more market share or less competition in products usually have more pricing power.

%and inversely proportional to the competition intensity. Here competition intensity is represented by $\frac{1}{2}(\frac{1}{d_{i}}+\frac{1}{d_{i-1}})$, which implies that competition intensity is decided by the market structure. That is, more close the companies are, more severe the competitions will be.
%\textbf{ ==== This paragraph is bad, rewrite. state why it is hard and importatn   (done...)==== }

\subsection{Dual-feature (2D) market}
We now turn to the 2D case.  The biggest problem of analyzing 2D market is that  companies' neighbors may change during their price change (as showed in Figure\ref{fig:2D-change}), while in 1D market, company $i$'s neighbors will always be $i-1$ and $i+1$. Due to the change in neighbors, companies' utility functions will be piecewise, i.e., utility function changes everytime a neighbor comes or goes. Moreover, since companies' locations are given arbitrarily, the shape of an company's market area may be irregular, which makes the analysis more difficult.
\begin{figure}
\centerline{\includegraphics[scale=0.35]{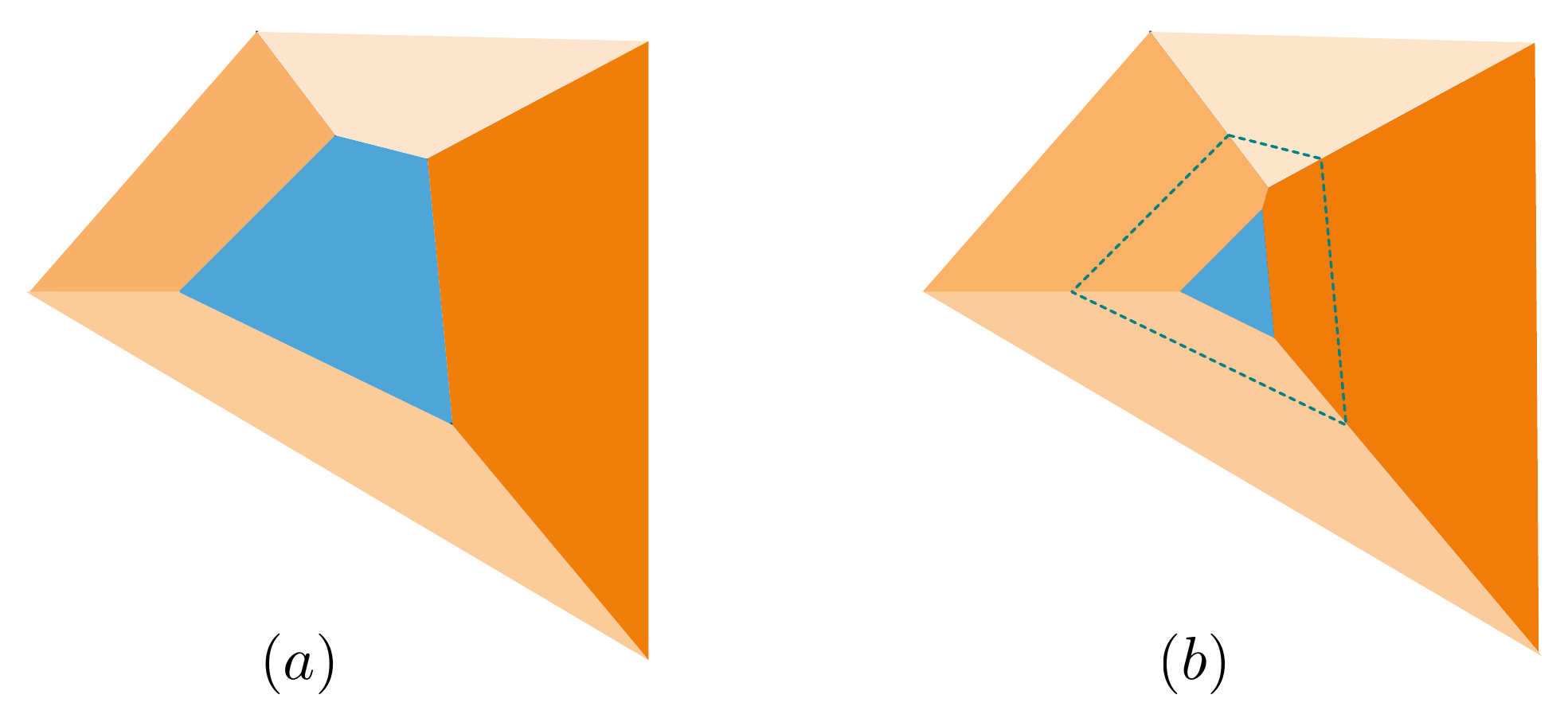}}
\caption{(a) Company $i$ (the one in the middle with blue colour) has 4 neighbors at price $P_i$;
(b)When $i$ increases its price, its market area shrinks from the dotted quadrangle to the blue triangle, and it has only 3 neighbors now. }
\label{fig:2D-change}
\end{figure}

Our method to overcome these difficulties is to transform the problem into 1D. Figure \ref{fig:2D-1D} shows such an example. We set an axis on two neighboring companies $i$ and $j$, i.e., they are both on the axis as shown in Figure \ref{fig:2D-1D} (a). Since the border line is straight and perpendicular to the axis, the position of the border line can be described in one dimension axis by variable $x_{ij}'$. The direction of the axis is determined by the way such that $x_i'<x'_{ij}<x'_j$. Note that there is only one way to determine this axis once the direction is  set, since $d_{ij}$ is constant.
\begin{figure}
\centerline{\includegraphics[scale=0.2]{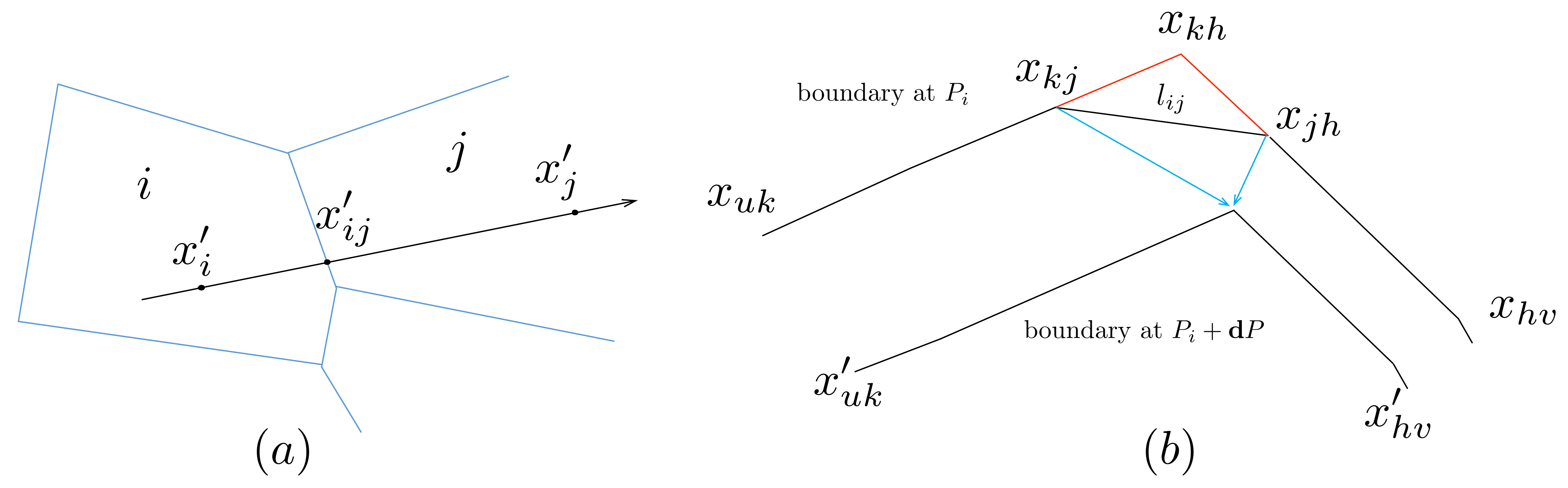}}
\caption{(a) Set an axis with $\mathbf{x}_{i}$, $\mathbf{x}_{j}$ on it so that we turn the 2D problem into 1D. Here $x'_{i}, x'_{j}$, and ${x}'_{ij}$ denote the coordinates of $\mathbf{x}_{i}$, $\mathbf{x}_{j}$ and boundary line $l_{ij}$ on the axis, respectively.
(b) When $i$ increases its price from $P_i$ to $P_i+\mathrm{d}P$, boundary between $i$ and $j$ disappears.}
\label{fig:2D-1D}
\end{figure}
In this case, we have our second main theorem in the paper.
\begin{theorem}\label{theorem:2d-exist}
In the 2D market, Nash equilibrium always exists when $q=0$.
\end{theorem}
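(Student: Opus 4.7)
I plan to apply Theorem \ref{thm:1952} exactly as in the 1D case of Theorem \ref{thm:1d-q0-exist}. The strategy space $[0,P_{upper}]^N$ is compact and convex, so it remains to verify (i) $W_i(\mathbf{P})$ is continuous in $\mathbf{P}$ and (ii) $W_i$ is quasiconcave in $P_i$ for every fixed $P_{-i}$. For (i), Theorem \ref{thm:convex} expresses each surviving $\mathcal{S}_i$ as the intersection of the half-planes $\mathcal{H}_i(i,j)$, whose bounding lines $\mathcal{L}_{i,j}$ depend affinely on $\mathbf{P}$; thus the polygon's vertices depend continuously on $\mathbf{P}$ (with edges collapsing continuously to points at transitions), and the Shoelace formula gives continuity of $S_i$ and hence of $W_i$.

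\textbf{Handling 2D via dimensionality reduction.} For quasiconcavity, the plan is to use the projection trick of Figure~\ref{fig:2D-1D}. Because each boundary $\mathcal{L}_{i,j}$ is perpendicular to the axis through $\mathbf{x_i}$ and $\mathbf{x_j}$, the same computation as in the 1D proof shows that its position along that axis is affine in $P_i$ with slope $-1/(2d_{ij})$. On any maximal open interval $I \subset [0,P_{upper}]$ on which the neighbor set $\mathcal{NR}(i)$ is fixed, each vertex is the intersection of two lines affine in $P_i$ and is therefore itself affine in $P_i$, so by the Shoelace formula $S_i$ is a quadratic polynomial in $P_i$, with derivative
\[
\frac{dS_i}{dP_i}=-\sum_{j\in\mathcal{NR}(i)}\frac{\ell_{ij}(P_i)}{2d_{ij}},
\]
where $\ell_{ij}$ is the length of the shared boundary. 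On each such piece $W_i=P_iS_i$ is thus a cubic in $P_i$ that starts at $0$, is non-negative, and vanishes once more when $S_i$ reaches $0$; so on a single piece $W_i$ is either monotone or strictly unimodal, giving piecewise quasiconcavity.

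\textbf{Gluing the pieces together.} The delicate step is turning piecewise quasiconcavity into global quasiconcavity across the neighbor changes illustrated in Figure~\ref{fig:2D-change}. Two kinds of transition occur, corresponding to Definition \ref{def:potential}: (a) a potential competitor $k$ becomes an actual neighbor because $\mathcal{L}_{i,k}$ first carves into the polygon, so $\ell_{ik}$ leaves $0$ continuously; and (b) an existing edge collapses as two adjacent vertices coalesce, so $\ell_{ij}\to 0$ continuously. In either case the boundary-length sum in the displayed formula is continuous at the transition, so $W_i$ is $C^1$ in $P_i$. I then plan to argue that $dS_i/dP_i$ is non-increasing in $P_i$ globally: within a piece the quadratic form of $S_i$ together with the fact that every edge moves inward forces this, and at a transition a term of magnitude $\ell/(2d)$ is added (or subtracted) from $\ell=0$, preserving monotonicity. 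Consequently $S_i(P_i)$ is concave and non-increasing, which yields unimodality of $W_i(P_i)=P_iS_i(P_i)$ on $[0,P_{upper}]$ and hence quasiconcavity.

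\textbf{Main obstacle.} The central technical difficulty is the global monotonicity of $dS_i/dP_i$ across neighbor transitions. Within one piece, the formula above makes this a direct consequence of the shrinking of the convex polygon under the common inward motion of all edges; but at a transition, the second derivative of $S_i$ can in principle jump, and I need to justify that the sign of the jump is always in the favorable direction. My plan is to exploit the power-diagram (``Laguerre'') structure induced by the quadratic distance $D(\mathbf{x},\mathbf{x_i})=\|\mathbf{x}-\mathbf{x_i}\|_2^2$: because every $\mathcal{L}_{i,j}$ has the \emph{same} form of inward velocity $1/(2d_{ij})$ along its own perpendicular axis, the reshaping of adjacent edge lengths is forced by the convex, shrinking geometry of $\mathcal{S}_i$ to be compatible with concavity of $S_i$. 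Turning this geometric intuition into a rigorous case analysis at each of the two transition types is the step I expect to be the most involved.
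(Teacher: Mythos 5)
Your setup (continuity via Theorem \ref{thm:convex}, the per-edge velocity $-1/(2d_{ij})$, and the formula $\frac{dS_i}{dP_i}=-\sum_j \ell_{ij}/(2d_{ij})$) matches the paper's, but the gluing step rests on a claim that is false: $S_i(P_i)$ is \emph{not} concave in general, so you cannot conclude quasiconcavity of $W_i$ that way. Concretely, place $i$ at the origin with four equidistant neighbors at $(\pm a,0)$, $(0,\pm a)$ charging a common price $P_0$. Then $\mathcal{S}_i$ is a square of half-side $c=(P_0-P_i+a^2)/(2a)$ and $S_i=4c^2$ is a \emph{convex} quadratic in $P_i$: as the polygon shrinks homothetically every edge length decreases, so $\sum_j\ell_{ij}/(2d_{ij})$ decreases and $dS_i/dP_i$ increases toward $0$. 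Your within-piece justification (``every edge moves inward forces this'') conflates inward motion of the bounding lines with monotone decrease of $|dS_i/dP_i|$; the two are unrelated. And your transition argument only re-establishes continuity of $dS_i/dP_i$ (a term enters or leaves at value $0$), not monotonicity. Relatedly, ``a nonnegative cubic on a piece is monotone or unimodal'' is not automatic --- a cubic can be increasing, then decreasing, then increasing --- so even piecewise quasiconcavity needs more than you supply.

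The paper's proof targets a different (and correct) invariant: it shows that $\frac{\mathrm{d}^2W_i}{\mathrm{d}P_i^2}$, not $\frac{\mathrm{d}^2S_i}{\mathrm{d}P_i^2}$, is monotonically increasing on $[0,P_{i,max})$, which makes $\frac{\mathrm{d}W_i}{\mathrm{d}P_i}$ convex; combined with $\frac{\mathrm{d}W_i}{\mathrm{d}P_i}(0)>0$ and negativity at the right end, convexity forces a single sign change of the derivative and hence quasiconcavity of $W_i$ (this accommodates the square example, where $W_i$ is concave then convex yet still unimodal). Within a piece this is done by writing $\ell_{ij}^2=(2M_{ij}P_i+C_{ij,1})^2+(2N_{ij}P_i+C_{ij,2})^2$ and checking that each parabolic piece of $\frac{\mathrm{d}^2W_i}{\mathrm{d}P_i^2}$ lies to the left of its axis of symmetry, hence is increasing; across a transition it is done by a triangle inequality on the triangle formed by the vanishing edge and the two extended adjacent edges, which shows the jump in $\frac{\mathrm{d}^2W_i}{\mathrm{d}P_i^2}$ is upward. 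If you want to salvage your plan, replace ``concavity of $S_i$'' with ``convexity of $\mathrm{d}W_i/\mathrm{d}P_i$'' as the quantity to be propagated across neighbor changes.
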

\begin{proof}
(Sketch)
 We prove Theorem \ref{theorem:2d-exist} as follows. We first show that $S_i(P_i)$ and $W_i(P_i)$ are continuous and that $W_i(P_i)$ is  twice differentiable within each continuous piece. Then, we calculate $\frac{\mathrm{d^{2}}W_{i}}{\mathrm{d}P_{i}^{2}}$, which turns out to be downward parabola in each pieces:
 \begin{eqnarray}
\frac{\mathrm{d}^{2}W_{i}}{\mathrm{d}P_{i}^{2}}=-\sum_{j\in\mathcal{NB}(i)}\frac{6\tilde{M}_{ij}P_{i}^{2}+5\tilde{N}_{ij}P_{i}+C_{ij,1}^{2}+C_{ij,2}^{2}}{d_{ij}l_{ij}}.	\nonumber		
\end{eqnarray}
 Here $\tilde{M}_{ij}$, $\tilde{N}_{ij}$ are related to the location of $i$ and $j$, $C_{ij,1}$, $C_{ij,2}$ are related to the location of two end points of $i$ and $j$'s border, and $l_{ij}$ denotes the length of border line between $i$ and $j$.
  %(See technical report \cite{TechReport} for more details).
 We then show its monotonicity, that is, the functions are monotone within each differential interval and monotone at each discontinuous point (Fig. \ref{fig:dw2dp} shows an example). Specifically, we first show that each pieces is on the left side of its axis of symmetry and prove monotonicity inside each pieces. For the discontinuous point between two pieces, the situation is shown in Fig. \ref{fig:2D-1D} (b):  $i$ and $j$'s boundary $l_{ij}$ disappears when $i$ increases its price from $P_i$ to $P_i+\mathrm{d}P$, i.e., $i$'s boundary changes from $\overline{\mathbf{x}_{uk}\mathbf{x}_{kj}\mathbf{x}_{jh}\mathbf{x}_{hv}}$ to $\overline{\mathbf{x}_{uk}'\mathbf{x}_{kh}'\mathbf{x}_{hv}'}$. We have applied triangle inequality on the triangle $ \bigtriangleup\mathbf{x}_{kh}\mathbf{x}_{kj}\mathbf{x}_{jh}$ to show that $\frac{\mathrm{d^{2}}W_{i}}{\mathrm{d}P_{i}^{2}}$ increases after the change of neighbors.
 Lastly, we prove that company $i$'s payoff function is quasi-concave in $P_i$ and then prove the existence of equilbrium.
\begin{figure}
\centerline{\includegraphics[scale=0.2]{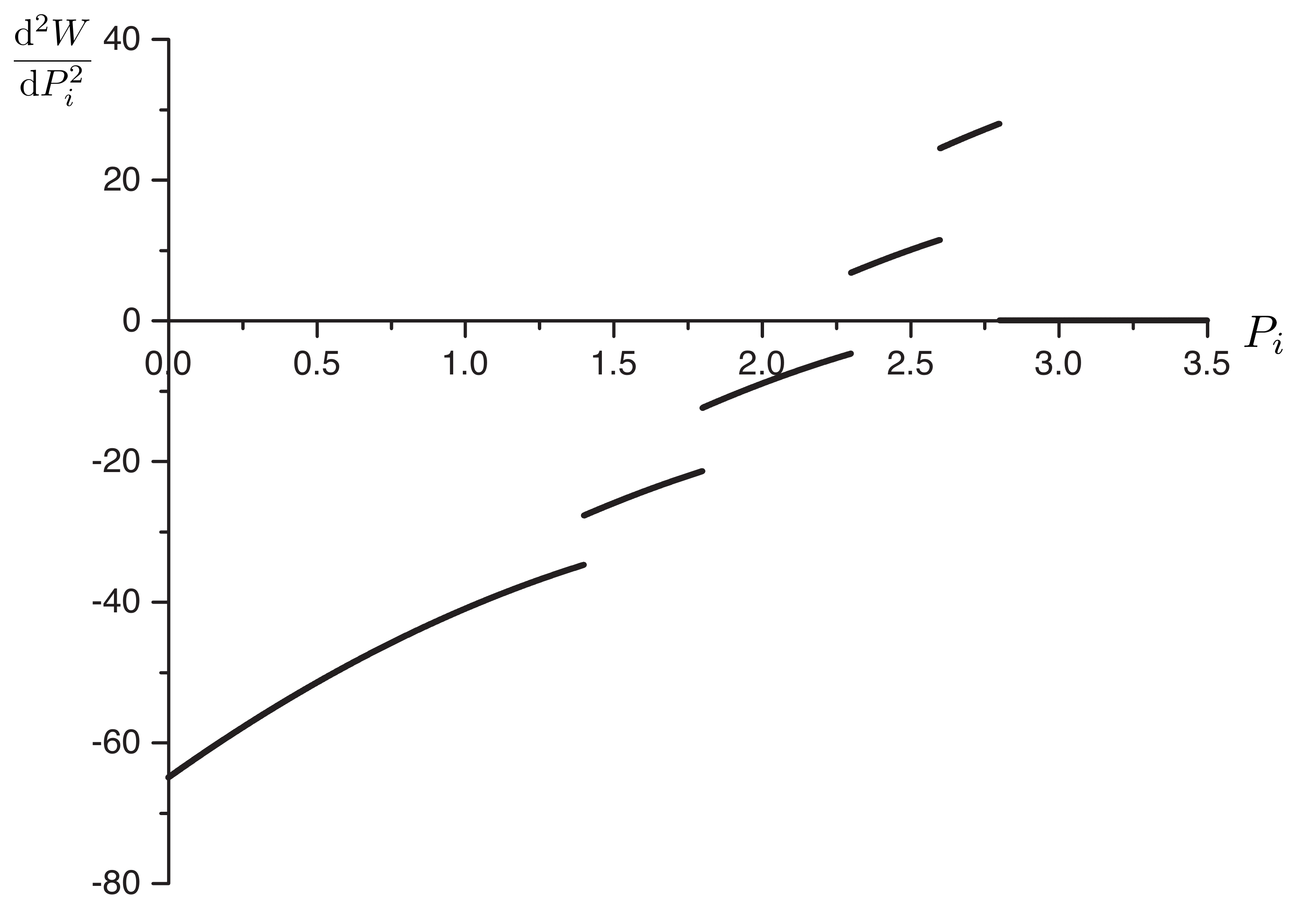}}
\caption{An example of  $\frac{\mathrm{d}^{2}W_{i}}{\mathrm{d}P_{i}^{2}}$. Each piece of  $\frac{\mathrm{d}^{2}W_{i}}{\mathrm{d}P_{i}^{2}}$ is a fragment from an increasing downward parabola, therefore  $\frac{\mathrm{d}^{2}W_{i}}{\mathrm{d}P_{i}^{2}}$ is a monotonically increasing function on $[0,P_{upper}]$.}
\label{fig:dw2dp}
\end{figure}
\end{proof}
%This is one of the main results in our paper.
This theorem guarantees the existence of  equilibrium in 2D market, regardless of the change in each company's neighbors, or the change in the shape and location of their market area.
%
%Thus, one cannot directly apply the results in Theorem \ref{thm:1d-q0-exist}, because the utility function varies every time the company changes its neighbors.
% We develop a method of analyzing this kind of piecewise utility function to prove the theorem.

%The proof for Theorem \ref{theorem:2d-exist} will be presented in Section \ref{section:proof-2dtheroem}.
\begin{theorem} \label{thm:2d-condition}
In the 2D market with $q=0$, when the market is at Nash equilibrium, we have:
\begin{equation}
P_i=\frac{1}{\sum_{j\in\mathcal{NR}(i)}\frac{l_{ij}}{2d_{ij}}}S_i, \quad\forall i\in \mathcal{N}.
\end{equation}
\end{theorem}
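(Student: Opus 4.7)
The plan is to mimic the first-order calculation used for Theorem \ref{thm:1d-q0-condition}, but now the derivative $\frac{\mathrm{d}S_i}{\mathrm{d}P_i}$ is computed by summing contributions from each neighbor-edge of $i$'s market polygon. At a Nash equilibrium the standard optimality condition $\frac{\mathrm{d}W_i}{\mathrm{d}P_i}=S_i+P_i\frac{\mathrm{d}S_i}{\mathrm{d}P_i}=0$ must hold, provided $W_i$ is differentiable at $P_i^*$. Theorem \ref{theorem:2d-exist} already establishes that $W_i(P_i)$ is piecewise twice differentiable with a strictly monotone derivative across pieces, so $W_i$ is differentiable at $P_i^*$ except possibly at the (finitely many) prices where the neighbor set jumps; at those kink points I would argue, using the monotonicity of $\frac{\mathrm{d}W_i}{\mathrm{d}P_i}$ shown in the proof of Theorem \ref{theorem:2d-exist}, that the left and right derivatives must both vanish, so the stated formula still applies on each side. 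Thus it suffices to compute $\frac{\mathrm{d}S_i}{\mathrm{d}P_i}$ assuming the neighbor set $\mathcal{NR}(i)$ is locally constant.

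For each $j\in\mathcal{NR}(i)$, I reuse the dimensionality-reduction trick from the proof of Theorem \ref{theorem:2d-exist} (see Fig.~\ref{fig:2D-1D}): place an axis through $\mathbf{x}_i$ and $\mathbf{x}_j$ with $x'_i<x'_j$ and $x'_j-x'_i=d_{ij}$. The boundary hyperplane $\mathcal{L}_{i,j}$ is orthogonal to this axis and hits it at $x'_{ij}$ satisfying $P_i+(x'_{ij}-x'_i)^2=P_j+(x'_{ij}-x'_j)^2$. Differentiating in $P_i$ gives
\begin{equation*}
1+2(x'_{ij}-x'_i)\frac{\mathrm{d}x'_{ij}}{\mathrm{d}P_i}=2(x'_{ij}-x'_j)\frac{\mathrm{d}x'_{ij}}{\mathrm{d}P_i},
\end{equation*}
so $\frac{\mathrm{d}x'_{ij}}{\mathrm{d}P_i}=-\frac{1}{2d_{ij}}$. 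Geometrically, the edge $l_{ij}$ translates toward $\mathbf{x}_i$ (parallel to itself, since $\mathcal{L}_{i,j}$ stays perpendicular to the axis) at speed $\frac{1}{2d_{ij}}$ per unit of $P_i$. The area swept out in time $\mathrm{d}P_i$ is therefore $\frac{l_{ij}}{2d_{ij}}\mathrm{d}P_i$ to first order; the corrections at the two endpoints of the edge, where neighboring edges move simultaneously, are of order $(\mathrm{d}P_i)^2$ and drop out.

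Summing over all edges of the polygon $\mathcal{S}_i$ yields
\begin{equation*}
\frac{\mathrm{d}S_i}{\mathrm{d}P_i}=-\sum_{j\in\mathcal{NR}(i)}\frac{l_{ij}}{2d_{ij}}.
\end{equation*}
Substituting into the first-order condition $S_i+P_i\frac{\mathrm{d}S_i}{\mathrm{d}P_i}=0$ and solving for $P_i$ gives exactly the asserted identity. The main obstacle I anticipate is the bookkeeping at neighbor-change prices and, relatedly, justifying that the endpoint motion of the edges contributes only $o(\mathrm{d}P_i)$; both should follow cleanly from the facts already established in the proof of Theorem \ref{theorem:2d-exist} — namely the continuity of $S_i$ in $\mathbf{P}$ and the convex-polygon structure from Theorem \ref{thm:convex}, which guarantees that each boundary edge is a line segment of finite length moving rigidly in the normal direction.
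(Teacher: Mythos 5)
Your proposal is correct and follows essentially the same route as the paper: the first-order condition $S_i+P_i\,\mathrm{d}S_i/\mathrm{d}P_i=0$ combined with the edge-by-edge computation $\mathrm{d}S_i/\mathrm{d}P_i=-\sum_{j}\frac{l_{ij}}{2d_{ij}}$, which the paper derives inside the proof of Theorem~\ref{theorem:2d-exist} using the same axis-through-$\mathbf{x}_i,\mathbf{x}_j$ reduction. The only cosmetic difference is your extra care at neighbor-change prices, which the paper disposes of by noting that $\mathrm{d}W_i/\mathrm{d}P_i$ remains continuous there because the appearing or disappearing edge has length zero at that moment.
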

The factor $\gamma=\sum_{j\in\mathcal{NR}(i)}\frac{l_{ij}}{2d_{ij}}$  represents the competition intensity. For a company $i$, farther distance to competitors (bigger $d_{ij}$) can reduce the competition intensity, while longer contiguous border (smaller $l_{ij}$) increases it.

\section{Market Equilibrium when $q=1$} 	\label{section:q>0}
In this section, we discuss the situation when $q=1$, i.e.,  when the market area has a linear relationship with the brand name. We show that the interesting ``wipe out" phenomenon appears when $q>0$ (In particular,  it happens when $q=1$).
%We know that bigger $q$ will intensify the feedback: leading companies in the market can take advantage of the market area gap to drive out other competitors.
%It turns out that in this case, the interesting ``Wipe Out" phenomenon appears when $q>0$.
%%This is similar to  the ``undercut" phenomenon in Hotelling model when a company suddenly loses its entire market area.

The ``wipe out'' phenomenon  substantially increases the difficulty in analyzing the problem,  because in this case a company's market area can suddenly shrinks to zero after some threshold price. In this case,   its neighbors' utility functions are  not continuous.
This is exactly the same problem as  in the classic Hotelling model, where ``undercut'' destroys the continuity of the utility function, and therefore lead to the non-existence of equilibrium.

Since this case is very different from the previous cases due to the  ``wipe out" phenomenon, we first discuss it below before  presenting the existence and necessary conditions for equilibrium.
We focus on the 1D market and $q=1$.

\subsection{The ``Wipe-out" Phenomenon} \label{subsec:wipe-out}

%First note that  $P_{i}\in[0,P_{upper}]$ is a compact convex set.
%
Fix $P_{-i}=(P_{1},...P_{i-1},P_{i+1},...P_{N})$ and consider $P_{i}=0$.
If $S_{i}(P_{i}=0)=0$, then  $W_{i}(P_{i})=0$ for all  $P_i\in[0,P_{upper}]$ and it is concave.
% which is also concave. Otherwise
If  $S_{i}(P_{i}=0)>0$,
let us gradually increase $P_i$ from zero and consider a price $P_i\in[0, P_{upper}]$ during this process. Denote its surviving left and right neighbors by $i-1, i+1$, and denote $R_i$ as the right boundary of $i$. Recall that in this case, $P_i\{R_i\}=P_{i}+(R_{i}-x_{i})^{2}-\beta S_{i}$.  Thus, we have:
%Then, we have:
%\begin{eqnarray}
%P_{i}+(R_{i}-x_{i})^{2}-\beta S_{i}^q&=&P_{i+1}+(R_{i}-x_{i+1})^{2}-\beta S_{i+1}^q	\nonumber\\
%P_{i}+(R_{i-1}-x_{i})^{2}-\beta S_{i}^q&=&P_{k}+(R_{i-1}-x_{i-1})^{2}-\beta S_{i-1}^q.  	\label{eq:1D-LR}
%\end{eqnarray}
\begin{eqnarray}
P_i\{R_i\}&=&P_{i+1}\{R_i\}		\nonumber\\
P_i\{R_{i-1}\}&=&P_{i-1}\{R_{i-1}\}  \label{eq:1D-LR}
\end{eqnarray}
Suppose now  $i$ changes its price by a small amount $\mathrm{d}P_{i}$, i.e., $P_{i}'=P_i+\mathrm{d}P_i$, such that $i-1, i+1$ still remain its neighbors, i.e., they still survive.
Denote $\mathrm{d}R_{i}$ and $\mathrm{d}R_{i-1}$ the deviations of the
right and left boundaries of $i$.  The change of market area for $i$ can then be written as
$\mathrm{d}S_{i}=\mathrm{d}R_{i}-\mathrm{d}R_{i-1}$, and the price of $i$ at new boundary is:
\begin{eqnarray}
 P_i'\{R_i+\mathrm{d}R_i\}=P_{i}+\mathrm{d}P_{i}+(R_{i}+\mathrm{d}R_{i}-x_{i})^{2}-\beta(S_{i}+\mathrm{d}S_{i}). \nonumber
\end{eqnarray}
 Similar to (\ref{eq:1D-LR}), we have two equations for the new boundaries:
\begin{eqnarray}
P_i'\{R_i+\mathrm{d}R_i\}&=&P_{i+1}'\{R_i+\mathrm{d}R_i\}		\nonumber\\
P_i'\{R_{i-1}+\mathrm{d}R_{i-1}\}&=&P_{i-1}'\{R_{i-1}+\mathrm{d}R_{i-1}\}. \label{eq:1D-LR_deviate}
\end{eqnarray}
%\begin{eqnarray}
%&&\quad P_{i}+\mathrm{d}P_{i}+(R_{i}+\mathrm{d}R_{i}-x_{i})^{2}-\beta(S_{i}+\mathrm{d}S_{i})^q \nonumber\\
%&&=P_{i+1}+(R_{i}+\mathrm{d}R_{i}-x_{i+1})^{2}-\beta(S_{i+1}+\mathrm{d}S_{i+1})^q		\nonumber\\
%&&\quad P_{i}+\mathrm{d}P_{i}+(R_{i-1}+\mathrm{d}R_{i-1}-x_{i})^{2}-\beta(S_{i}+\mathrm{d}S_{i})^q \nonumber\\
%&&=P_{i-1}+(R_{i-1}+\mathrm{d}R_{i-1}-x_{k})^{2}-\beta(S_{i-1}+\mathrm{d}S_{i-1})^q 	\label{eq:1D-LR_deviate}
%\end{eqnarray}
%\begin{eqnarray}
%P_{i}\{R_i+\mathrm{d}R_i\}&=&P_{i+1}\{R_i+\mathrm{d}R_i\}		\nonumber\\
%P_{i}\{R_i+\mathrm{d}R_i\}&=&P_{i-1}\{R_i+\mathrm{d}R_i\}		\label{eq:1D-LR_deviate}
%\end{eqnarray}
By subtracting (\ref{eq:1D-LR}) from (\ref{eq:1D-LR_deviate}), we have:
%\begin{eqnarray}
%\mathrm{d}S_{i}(1-\beta\frac{qS_i^{q-1}}{2d_{i}}-\beta\frac{qS_i^{q-1}}{2d_{i-1}})=-\mathrm{d}P_{i}(\frac{1}{2d_{i}}+\frac{1}{2d_{i-1}})-\beta\frac{qS_{i+1}^{q-1}\mathrm{d}S_{i+1}}{2d_{i}}-\beta\frac{qS_{i-1}^{q-1}\mathrm{d}S_{i-1}}{2d_{i-1}} \nonumber
%\end{eqnarray}
%If we only focus on the situation when $q=1$, we have:
\begin{eqnarray} \label{eq:ds}
\mathrm{d}S_{i}(1-\frac{\beta}{2d_{i}}-\frac{\beta}{2d_{i-1}})&=&-\mathrm{d}P_{i}(\frac{1}{2d_{i}}+\frac{1}{2d_{i-1}})-\beta\frac{\mathrm{d}S_{i+1}}{2d_{i}} \nonumber\\
&&-\beta\frac{\mathrm{d}S_{i-1}}{2d_{i-1}}.
\end{eqnarray}
In fact, when we gradually change $P_i$  from $0$ to $P_{upper}$, $S_i$  will decrease. This is because with any small increment in $P_i$, at least customers at $i$'s boundary will  change to purchase at a different company. Due to area decrease, $i$ will lose even more customers. Similar analysis shows that $i+1$ and $i-1$ will have a positive market area deviation shown in (\ref{eq:ds}).

%\textbf{====I stop here I stop here I stop here I stop here July14 at 6pm====}

Hence, the right hand side of (\ref{eq:ds}) and $\mathrm{d}S_i$ in the left hand side will remain negative when $\mathrm{d}P_i>0$, implying that $1-\frac{\beta}{2d_{i}}-\frac{\beta}{2d_{i-1}}>0$, or equivalently $\beta<\frac{2d_{i}d_{i-1}}{d_i+d_{i-1}}$. Since during the price increment, $i$'s area will shrink and $i$' neighbors will change, once a close neighbor of $i$ with $\frac{2d_{i}d_{i-1}}{d_i+d_{i-1}}<\beta$ gains a positive market area (the new neighbor is denoted by $i-1$), $i$'s area will immediately shrink to $0$, i.e.,  ``wipe out" happens.

%\textbf{====the explanation below (till the end of the section) should be modified. It is not clear to me====}
Let us try to understand the meaning of upper bound  $\frac{2d_{i}d_{i-1}}{d_i+d_{i-1}}$ for $\beta$. %, we can see it in the other way. We are going to give the formula of $S_i$ and see when will $S_i\geq0$.
Suppose we hide company $i$ from the market just for the moment, i.e., $i$ is invisible to customers.  Since $i-1$ and $i+1$ remain activated,  they will share $i$'s previous market area.  Let $R'$ denote $i-1$ and $i+1$'s boundary, and $\tilde{P}_{i-1}\{x\}$ denote the aggregate price of $i-1$ at $x$ when $i$ is hidden. Since their price at boundary will be the same, i.e.,
 $\tilde{P}_{i-1}\{R'\}=\tilde{P}_{i+1}\{R'\}$, and note that $x_{i+1}-x_{i-1}=d_i+d_{i-1}$, we have:
%  \textbf{====State how one gets this====}
\begin{eqnarray}
R'= \frac{P_{i+1}-P_{i-1}-\beta S_{i+1}+\beta S_{i-1}}{2(d_i+d_{i-1})}+\frac{x_{i+1}+x_{i-1}}{2}\label{eq:R'}
\end{eqnarray}

%It is obvious that price at $R'$ is the most expensive in $[x_{i-1},x_{i+1}]$. Since $x_{i-1}<x_i<x_{i+1}$, if
Now let $i$ come back to the market.  If $i$'s price at $R'$ is higher than the price of $i-1$ or $i+1$ here, i.e., $\tilde{P}_{i+1}\{R'\}$, or $\tilde{P}_{i-1}\{R'\}$, then $i$ will have no market area.
This is because if $P_i\{R'\}>\tilde{P}_{i-1}\{R'\}$, for any $\delta>0$, at market point $R'-\delta$, we have $P_i\{R'-\delta\}-\tilde{P}_{i-1}\{R'-\delta\}=P_i\{R'\}-\tilde{P}_{i-1}\{R'\}+2d_{i-1}\delta>0$, similarly we always have $P_i\{R'+\delta\}>\tilde{P}_{i+1}\{R'+\delta\}$, which means $i$ will have no market area either at $x<R'$ or $x>R'$.

Consider the price gap between $i$ and $i-1$ at $R'$, by equation (\ref{eq:R'}), we have:
\begin{eqnarray}
\tilde{P}_{i-1}\{R'\}-P_i\{R'\}=\frac{2}{d_i+d_{i-1}}\Psi, 	\label{eq:beta-s}
\end{eqnarray}
where  $\Psi\triangleq\frac{P_{i+1}+d_i^2-\beta S_{i+1}-P_i}{2d_i}+\frac{P_{i-1}+d_{i-1}^2-\beta S_{i-1}-P_{i}}{2d_{i-1}}$.
If $\Psi<0$, which means $P_i\{R'\}>\tilde{P}_{i-1}\{R'\}$,
%then $i$ can not beat $i-1$ or $i+1$ even at their boundary,
accroding to the analysis above, $i$ will have no market area. Hence, if $i$ survives, we must have $\Psi>0$.
When $i$ is activated in the market, with the two equations in (\ref{eq:1D-LR}), we have
% let $\Psi=\frac{P_{i+1}+d_i^2-\beta S_i-P_i}{2d_i}+\frac{P_{i-1}+d_{i-1}^2-\beta S_{i-1}-P_{i}}{2d_{i-1}}$, we have:
\begin{eqnarray}
(1-\frac{\beta}{2d_{i}}-\frac{\beta}{2d_{i-1}})S_i&=&\frac{P_{i+1}+d_i^2-\beta S_{i+1}-P_i}{2d_i}\\
&&+\frac{P_{i-1}+d_{i-1}^2-\beta S_{i-1}-P_{i}}{2d_{i-1}} \nonumber \\
&=& \Psi 	\label{eq:beta-bound}
\end{eqnarray}

Interestingly,  $\Psi$ from (\ref{eq:beta-s}) appears in (\ref{eq:beta-bound}) again.
From (\ref{eq:beta-bound}), it is now clear that if $i$ survives, we must have $S_i>0$ and $\Psi>0$. Hence, we must have $(1-\frac{\beta}{2d_{i}}-\frac{\beta}{2d_{i-1}})>0$, i.e., $\beta<\frac{2d_{i}d_{i-1}}{d_i+d_{i-1}}$, if $i$ is not wiped out.

%\textbf{==Does this mean that $\beta<\frac{2d_{i}d_{i-1}}{d_i+d_{i-1}}$ is necessary for no wipe-out ``Yes''==}

%In (\ref{eq:beta-bound}) we can see that since $\Psi>0$, $\beta<\frac{2d_{i}d_{i-1}}{d_i+d_{i-1}}$ is exactly the condition when $S_i>0$, which is coincident to the results deduced from (\ref{eq:ds}).

\subsection{Existence of Equilibrium}
 So far we have introduced the ``wipe out'' phenomenon and explained  the bound  for $\beta$ in two ways. Now we show the existence of equilibrium.
\begin{theorem} \label{thm:1d-q1-exist}
Nash equilibrium always exists in the 1D market with $q=1$.
\end{theorem}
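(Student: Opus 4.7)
The plan is to extend the approach of Theorem \ref{thm:1d-q0-exist} while handling the discontinuities introduced by the wipe-out phenomenon. For fixed $P_{-i}$, I would first partition $[0, P_{upper}]$ into a finite collection of maximal intervals on which the set of surviving companies (and hence the neighbor set $\mathcal{NR}(i)$) is constant. On each such piece the boundary equations (\ref{eq:1D-LR}), augmented with the brand term $\beta S$, can be solved by linear algebra: combining them with $S_i = R_i - R_{i-1}$ and using the companion relations for $S_{i-1}$ and $S_{i+1}$ yields an affine expression $S_i = \alpha - \mu P_i$ with $\mu>0$ whenever the non-wipe-out condition $\beta < \frac{2 d_i d_{i-1}}{d_i + d_{i-1}}$ derived in Section \ref{subsec:wipe-out} holds. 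Hence $W_i(P_i) = P_i(\alpha - \mu P_i)$ is a downward parabola on each piece, just as in the $q=0$ case.

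Next I would analyze the finite set of breakpoints between pieces. Two types occur: a smooth transition where a marginal neighbor's area crosses zero continuously, and a wipe-out where some company's area jumps discontinuously to zero. The key claim to establish is a one-sided semi-continuity property for $W_i(\cdot,P_{-i})$: at each threshold where a company other than $i$ is wiped out, the value of $W_i$ on the adjacent piece is attained in the closure (so $\sup$ equals $\max$), and above any threshold at which $i$ itself is wiped out the payoff is identically zero and may be discarded. Together these observations imply that $\mathbf{argmax}_{P_i} W_i(P_i, P_{-i})$ is non-empty for every $P_{-i}$, since each parabolic piece attains its maximum in its closure and the pieces are finitely many.

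Having a well-defined best response, I would then invoke an existence theorem tailored to games with discontinuous payoffs, since Theorem \ref{thm:1952} does not apply directly: $W_i$ may fail to be continuous in $P_{-i}$ precisely at the knife-edge configurations where some competitor is about to be wiped out. One route is Reny's better-reply security theorem: for any non-equilibrium profile $\mathbf{P}$, exhibit a deviation by some $i$ that secures a strictly larger payoff on an open neighborhood of $P_{-i}$, using the parabolic structure on each piece to perturb slightly away from any discontinuity. An alternative is a direct Kakutani argument on a compactified strategy space obtained by adjoining the one-sided limit points created by wipe-out.

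The main obstacle will be the second stage, the analysis of wipe-out discontinuities. Specifically, when $i$ raises its price past a threshold where a neighbor $j$ is wiped out, $i$ inherits part of $j$'s former territory and $W_i$ can jump in either direction; one must show that the sup of $W_i$ over the full interval is nevertheless attained. Resolving this requires tracking how the affine slope $\mu$ of $S_i$ evolves across successive pieces as $i$ acquires more distant neighbors, and verifying that the new distances $d,d'$ continue to satisfy the survival bound $\beta < \frac{2 d d'}{d + d'}$ so that each subsequent parabola is well-defined and its vertex is correctly located. Executing this case analysis, and then combining it with the better-reply security step, is what the proof ultimately hinges on.
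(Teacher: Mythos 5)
Your overall decomposition into price intervals with a fixed neighbor structure, the affine/parabolic form of $W_i$ on each piece, and the use of the survival bound $\beta<\frac{2d_id_{i-1}}{d_i+d_{i-1}}$ all match the first half of the paper's argument. But the proposal has a genuine gap exactly where you locate the ``main obstacle,'' and the tool you propose to close it does not fit. Reny's better-reply security theorem yields a \emph{pure} equilibrium only for games that are also quasiconcave in each player's own strategy; your own observation that $W_i$ can jump ``in either direction'' at a wipe-out threshold is precisely the obstruction to quasiconcavity, and you never resolve it. (As an aside, the dangerous direction is the opposite of the one you describe: raising $P_i$ \emph{helps} a neighbor $j$, so $i$ cannot wipe out a neighbor by increasing its own price; the discontinuity you should worry about is $S_i$ jumping \emph{up} as $P_i$ decreases past the threshold at which a neighbor collapses.) Verifying better-reply security, or making the Kakutani-on-a-compactification argument precise, is left entirely open, so the proposal is a plan rather than a proof.

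The paper avoids this machinery altogether by splitting into two cases. When $\beta<\frac{2d_id_{i-1}}{d_i+d_{i-1}}$ holds for \emph{every} $i$, no wipe-out can ever occur, so $S_i(P_i)$ is continuous (a new potential competitor's area grows from zero), $\frac{\mathrm{d}S_i}{\mathrm{d}P_i}$ only drops at junctions, $W_i$ is continuous and quasiconcave, and the standard Theorem \ref{thm:1952} applies --- no discontinuous-games theorem is needed. When the bound fails for some companies, the paper does not attempt a fixed-point argument at all: it \emph{constructs} an equilibrium by selecting a maximal set of ``activated'' companies all satisfying the bound, letting them reach the equilibrium guaranteed by the first case, and assigning price $P_{upper}$ to the remaining ``hidden'' companies, which are then shown (via the $\Psi$-analysis of Section \ref{subsec:wipe-out}) to be unable to gain positive market area by any unilateral deviation, while the activated companies' best responses are unaffected. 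This activation scheme is the key idea missing from your proposal; without it, or without an actual verification of quasiconcavity plus better-reply security across wipe-out discontinuities, the existence claim is not established.
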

\begin{proof}(Sketch)(a) To prove this theorem, we first show that when there is no ``wipe out,'' i.e., $\beta$ is small such that $\beta<\frac{2d_{i}d_{i-1}}{d_i+d_{i-1}}, \forall i$, Nash equilibrium exists.
%This is similar to the case in Theorem \ref{thm:1d-q0-exist}, because
Under the no ``wipe out'' condition, one can guarantee that the utility function of a company is a piecewise continuous function.  The main difficulty in proving the quasi-concavity here is in showing that there exists a threshold price, such that the utility function is non-decreasing before the price exceeds it, and non-increasing after. %, especially at break points, i.e., junction points between two differential pieces.
%
%When ``wipe out'' cannot be completely eliminated,

(b) For the case when $\beta<\frac{2d_{i}d_{i-1}}{d_i+d_{i-1}}$ does not hold for all $i$, we construct an equilibrium with a set of carefully chosen ``activated'' companies, while  the others are considered ``hidden.'' Then, we show that this partial equilibrium among the chosen companies turns out to be exactly the equilibrium for all companies.
We can operate in the following way to give at least one equilibrium. We choose some companies to be hidden, which means they can be seen as not existed in the market, while the others being ``activated.''  For the simplicity in presentation below, we define a condition among those activated companies:
\begin{eqnarray}\beta<\frac{2d_{i}d_{i-1}}{d_i+d_{i-1}}\quad \forall\,\, \text{activated company } i .	\label{condition:activated-wipe-out}
\end{eqnarray}
  The way we choose is as follow:
(i) Hide all companies first. Then activate companies one by one until we can not activate any one more companies without bringing ``wipe out'' phenomenon into the market. That is, all activated companies satisfied the condition (\ref{condition:activated-wipe-out}), but activating any one of the hidden companies will violates it.
(ii) Among all activation schemes in (i), we can guarantee that there exists at least one scheme such that for any one hidden company, say $j$, it violates the inequality $\beta<\frac{2d_{j}d_{j-1}}{d_j+d_{j-1}}$. This is because if $\beta<\frac{2d_{j}d_{j-1}}{d_j+d_{j-1}}$ is satisfied for $j$, according to (i), there must exists one or two of its neighbors, say $k$, such that $\beta>\frac{2d_{k}d_{k-1}}{d_k+d_{k-1}}$, in that case, we can always activate $j$ and hide $k$.
(iii) Now that all activated companies satisfy the condition  (\ref{condition:activated-wipe-out}), while each  hidden company is being ``wiped out''.

Then for activated companies there exists equilibrium as proved in (a). Now let those hidden companies come back to the market with their prices being $P_{upper}$. From the analysis of equation (\ref{eq:beta-s}), we can see that hidden company $i$ will never survive by unilateral price change. For any activated company $j$, since hidden companies are being ''wiped out''(at the status of price being  $P_{upper}$ and have no market area), they can never survive by $j$'s unilateral action. Hence, $j$'s price will remain unchanged. This setting guarantees at least one equilibrium when $\beta<\frac{2d_{i}d_{i-1}}{d_i+d_{i-1}}, \forall i$ does not hold.
%
% in $\mathcal{N}$.
%because in this case, those ``hidden'' companies will find that they can never gain market area no matter what price they set.
%\textbf{====what is the meaning of this paragraph??====}
\end{proof}

\subsection{Necessary condition for equilibrium}
\begin{theorem}\label{thm:condition}
When a market with dimension $K$ is at a Nash equilibrium, for any company $i$, if it has no potential competitors,
\begin{equation} \label{eq:1d-part1}
P_i^*=c_i\cdot S_i^*,  \quad\quad \forall i \in \mathcal{N}_S
\end{equation}
where $c_i=-\frac{dP_{i}}{dS_{i}}|_{P_{i}=P_{i}^{*}}>0$.

Otherwise, we have:
\begin{equation} \label{ineq:equil}
c_i'\cdot S_{i}^*\leq P_{i}^*\leq c_i''\cdot S_{i}^*, \quad\quad \forall i \in \mathcal{N}_S.
\end{equation}
Here $c_i'=-\frac{dP_{i}}{dS_{i}}|_{P_{i}=P_{i}^{*}+}>0$, $c_i''=-\frac{dP_{i}}{dS_{i}}|_{P_{i}=P_{i}^{*}-}>0$.
Inequality (\ref{ineq:equil}) becomes an equality only when $i$'s potential competitor does not survive  at $P_i^*$.
\end{theorem}
In particular, when the market is 1D or 2D, and $\beta<<d_i$ holds for all $i$,   we have:
$c_i=-\frac{dP_{i}}{dS_{i}}|_{P_{i}=P_{i}^{*}}\approx\frac{1-\beta\sum_{j\in\mathcal{NR}(i)}\frac{l_{ij}}{2d_{ij}}}{\sum_{j\in\mathcal{NR}(i)}\frac{l_{ij}}{2d_{ij}}(\frac{\beta l_{ij}}{2d_{ij}}+1)}$.

Equation (\ref{eq:1d-part1}) is similar to Theorem \ref{thm:1d-q0-condition}, where the equilibrium price is proportional to market area.  Inequality (\ref{ineq:equil}) shows that potential competitors  restrict the pricing power of a company,  which results in lower selling price and hence lower profit compare to  (\ref{eq:1d-part1}). Since potential competitors are closer than actual competitors in market, once they survive, they
can cause significant harm to a company's market area and profit, due to the close similarity in  product features and high substitutability.
%\begin{remark} [Brand Factor] Theorem \ref{theorem:1d-ps-eq} provides the following interesting implications. (i) When there is no brand effect, i.e., brand factor is zero,  the equilibrium price of a company  is proportional to  its market area (market power) over the competition intensity  with its neighbors (boundary/distance).
%%This implies that lowering prices does not necessarily lead to market share increase, as observed in the smartphone market, where the intensity is high for customers preferring low-end phones.
%%
%(ii) When brand name has a positive effect in attracting customers, it is important to lower the price and seize more market area. Meanwhile,  companies with big market areas should try to  steer customers' shopping habits towards emphasizing on brand reputation (increase brand factor).
%(iii) Start-up companies should try to avoid markets or industry where the brand factor is large, in which case big companies are at strong advantage, and to avoid positioning at market points where competition is intense.
%\end{remark}

In addition, a company with larger market area can widen the market share gap by expanding the dimension of market, i.e., actively compete with weaker competitors using products different in multi-features. This is because market area gap in high dimension can be much bigger than the cases in 1D market, which in return will result in bigger aggregate price gap.
%\textbf{==theorem 8 is not well explained?? ==}

\section{Conclusion} \label{section:con}
In this paper, we study equilibrium properties based on a variant Hotelling model, considering brand name effect in the market by including a market area term into aggregate price.
%stylized yet general abstract market model.
We prove the existence of Nash equilibrium  in single-feature and dual-feature market,
%We provide the sufficient conditions for existence of  Nash equilibrium both in single-feature and dual-feature market, where company brand effect is modeled by a linear function of its market area times a market factor.
and also derive explicit characterizations of equilibrium prices and market areas. Our results reconcile the common belief that company's pricing power is proportional to its market area over competition intensity, and offer insight into pricing under brand name effect and market positioning.

%Discussion above give us a deeper understanding to the brand effect on the market. Depends on the $\beta$ value, we have provided  the following interesting implications.

Specifically, our results offer the following insight:
(i) When there is no brand effect or equivalent brand effect, i.e., $\beta$ or $q$ is zero,  the equilibrium price of a company  is proportional to  its market area (market power) over the competition intensity  with its neighbors (boundary over distance). This implies that lowering prices does not necessarily lead to market share increase, as observed in the smartphone market, where the competition intensity may be higher for customers who prefer low-end phones.
%This may be the explanation for the Samsung market share drop mentioned at the begining of the paper: Samsung substantially  lowered its average saling price(ASP) during the time period we discussed, which made the company to face a much intense competition with the low ASP competitors.
%
(ii) When brand name has a positive effect in attracting customers, it is important to lower the price and seize more market area. Meanwhile,  companies with big market areas should try to  steer customers' shopping habits towards emphasizing on brand reputation (increase brand factor).
(iii) New companies should try to avoid markets where the brand factor is large, in which case big companies are at strong advantage, and to avoid positioning at market points where competition is intense (many nearby companies), where due to the ``wipe out'' phenomenon, small companies may not be able to survive.

%In the end, we are going to give one open problems: does Nash equilibrium exist in 2D market when $q=1$?
%
%The situation in 2D market when $q=1$ remain unsolved, because the wipe out phenomenon appears much more often in higher dimension, which rapidly increase the difficulty of analysis.
%
%

\section{Appendix}
\subsection{Proof of Lemma \ref{lemma:one-survive}}
\begin{proof} Suppose for contradiction that for some $\mathbf{x_{0}}$, $S_{i}=0$ for all
$i\in\mathcal{O}'(\mathbf{x_{0}})$.
Suppose there is no company located on the point $\mathbf{x_{0}}$.
Define a neighborhood ball of \textbf{$\mathbf{x_{0}}$}, i.e.,
$Ball\{\mathbf{x_{0}}\}=\{\mathbf{x}|\thinspace||\mathbf{x}-\mathbf{x_{0}}||_{2}<\delta\}$. Let $\delta$ to be small enough such that there are no companies in this neighborhood ball. This is possible since the number of companies is finite.

The lowest and highest aggregate price that a company $i$ offers in this ball, denoted by $P_{i,low}$ and $P_{i,high}$, will be at the two intersect points between the ball and the line connecting $\mathbf{x_{i}}$ and $\mathbf{x_{0}}$.

Then, the price of any company $k\notin\mathcal{O}'(\mathbf{x}_{0})$
at $\mathbf{x}\in Ball\{\mathbf{x_0\}}$ satisfies  $P_{k}\{\mathbf{x}\}\geq P_{k,low}= P_{k}\{\mathbf{x_{0}}\}+\delta^{2}-2\delta\left\Vert\mathbf{x_k}-\mathbf{x_0}\right\Vert_2$,
and the price of any company $j\in\mathcal{O}'(\mathbf{x}_{0})$
satisfies  $P_{j}\{\mathbf{x}\}\leq P_{j,high}= P_{j}\{\mathbf{x_{0}}\}+\delta^{2}+2\delta\left\Vert\mathbf{x_j}-\mathbf{x_0}\right\Vert_2$.
If $\max_{j\in\mathcal{O}'(\mathbf{x_{0}})}P_{j,high}<\min_{k\notin\mathcal{O}'(\mathbf{x_{0}})}P_{k,low}$, we can make sure that all owners of market points in this ball will be in  $\mathcal{O}'(\mathbf{x}_{0})$.
Thus, we can
choose
$\delta<\frac{(\min_{k\notin\mathcal{O}'(\mathbf{x_{0}})}P_{k}\{\mathbf{x_{0}}\}-P_{j}\{\mathbf{x_{0}}\}) }{\max_{m\notin\mathcal{O}'(\mathbf{x_{0}}),n\in\mathcal{O}(\mathbf{x_{0}})}2(\left\Vert\mathbf{x_m}-\mathbf{x_0}\right\Vert_2+\left\Vert\mathbf{x_n}-\mathbf{x_0}\right\Vert_2)} $
to guarantee that all owners of market points in this ball are in  $\mathcal{O}'(\mathbf{x}_{0})$.
Hence, $\sum_{j\in\mathcal{O}'(\mathbf{x}_{0})}S_{j}>0$,
%Since they can not provide the same price at every point of the ball,
and $|\mathcal{O}'(\mathbf{x_{0}})|\leq N$,  which
violates the assumption that   $S_{j}=0,\forall j\in\mathcal{O}'(\mathbf{x}_{0})$.

As for the case when there is a company located exactly on  $\mathbf{x_{0}}$, if company  $\mathbf{x_{0}}\in \mathcal{O}'(\mathbf{x}_{0})$, the analysis will be the same. Otherwise the analysis still holds since we always have $P_j\{\mathbf{x_{0}}\}<P_k\{\mathbf{x_{0}}\}$ for $k\notin\mathcal{O}', j\in\mathcal{O}'$ regardless of $\delta$.
\end{proof}

%\subsection{Proof of Lemma \ref{lemma:one-point} }
%\begin{proof}Suppose the contrary  that $i$ provides the lowest price
%at more than one point, such as at $\mathbf{x_{1}},\mathbf{x_{2}}$,
%and it has $S_{i}=0$. From Lemma \ref{lemma:one-survive} we know that there exists
%at least one surviving owner $j$ of $\mathbf{x_{1}}$, with $S_{j}>0$
%.
%
%When market's dimension $K>1$, we can draw an axis with $\mathbf{x_{1}},\mathbf{x_{2}}$
%on it and project $i,j$ onto the axis to $x_{i},x_{j}$.
%Note that distance of companies $i,j$ to this axis will make no difference to the
%proof as they are constants,  therefore we can turn the high dimension case into an 1D problem.
%\textbf{why can one do this? shouldn't the other dimensions affect the overall distance?}
%
%
%Suppose $x_{2}-x_{1}=d>0$ and $x_{j}<x_{i}$ (the other situation
%will be similar).  Then, $P_{i}\{x_{1}\}=P_{j}\{x_{1}\}$,
%but $P_{i}\{x_{2}\}-P_{j}\{x_{2}\}=P_{i}\{x_{1}\}-P\{x_{1}\}+2d(x_{i}-x_{j})>0$, which leads to  an obvious contradiction.
%\end{proof}

\subsection{Proof of Theorem \ref{thm:convex}}
\begin{proof}
Consider the surviving companies.  We first see that the set
of market area where company $i$ and $j$ provide the same price is a
straight line(hyperplane in high dimension) given by:
\begin{eqnarray*}
\mathcal{L}_{i,j}&&=\{(x,y)|P_{i}+(x-x_{i})^{2}+(y-y_{i})^{2}-\beta S_{i}\\
&&\qquad-[P_{j}+(x-x_{j})^{2}+(y-y_{j})^{2}-\beta S_{j}]=0\}\\
&&=\{(x,y)|2x(x_{i}-x_{j})+2y(y_{i}-y_{j})-P_{i}+P_{j}\\
&&\qquad-\beta S_{j}+\beta S_{i}+x_{j}^{2}+y_{j}^{2}-x_{i}^{2}-x_{j}^{2}=0\}
\end{eqnarray*}
 This line divides the plane into 2 open half space $\mathcal{H}_{i}(i,j),\mathcal{H}_{j}(i,j)$,
where in $\mathcal{H}_{i}(i,j)$ we have $P_{i}\{(x,y)\}<P_{j}\{(x,y)\}$
and in $\mathcal{H}_{j}(i,j)$ we have $P_{j}\{(x,y)\}<P_{i}\{(x,y)\}$.
Thus, for any point $\mathbf{x}=(x,y)\in\mathcal{H}_{i}(i,j)$, we have
$j\notin\mathcal{O}(\mathbf{x})$ and vice versa. Let $\mathcal{U}_{i}=\cap_{j\in\mathcal{N},j\neq i}\mathcal{H}_{i}(i,j)$,
since $\mathcal{U}_{i}$ is the intersection of half
spaces, $\mathcal{U}_{i}$ is a convex polygon.

Now we prove that $\mathcal{U}_{i}=\mathcal{S}_{i}$. For
any $\mathbf{x}\in\mathcal{S}_{i}$, we have $\mathcal{O}(\mathbf{x})=\{i\}$.
Hence,  $\mathbf{x}\in\mathcal{H}_{i}(i,j),\forall j\neq i$. Thus, $\mathbf{x}\in\cap_{j\in\mathcal{N},j\neq i}\mathcal{H}_{i}(i,j)=\mathcal{U}_{i}$ and
$\mathcal{S}_{i}\in\mathcal{U}_{i}$. For any $\mathbf{x}\in\mathcal{U}_{i}$,
we see that $j\notin\mathcal{O}(\mathbf{x})$ if $j\neq i$. Since $\mathcal{O}(\mathbf{x})\neq\emptyset$, $\mathcal{O}(\mathbf{x})=\{i\}$, $\mathcal{U}_{i}\in\mathcal{S}_{i}$ and $\mathcal{U}_{i}=\mathcal{S}_{i}$.

It remains to show that $\mathcal{S}'_{i}=\mathbf{cl}(\mathcal{U}_{i})$.
For any $\mathbf{x}\in\mathbf{cl}(\mathcal{U}_{i})$, we have $P_{i}\{\mathbf{x}\}\leq P_{j}\{\mathbf{x}\},\forall j\neq i$. Hence,  $i\in\mathcal{O}(\mathbf{x})$ and $\mathbf{x}\in\mathcal{S}'_{i}$, which implies
$\mathbf{cl}(\mathcal{U}_{i})\in\mathcal{S}'_{i}$. Now consider any $\mathbf{x}\in\mathcal{S}'_{i}$.
Since for all $\mathbf{x}'\in\mathbf{cl}(\mathcal{U}_{i}^{C})$,
where $\mathbf{cl}(\mathcal{U}_{i}^{C})=\mathcal{M}-\mathbf{cl}(\mathcal{U}_{i})$
is the complementary set of $\mathbf{cl}(\mathcal{U}_{i})$, we have
$i\notin\mathcal{O}(\mathbf{x}')$. Thus, we must have $\mathbf{x}\in\mathbf{cl}(\mathcal{U}_{i})$ and $\mathcal{S}'_{i}\in\mathbf{cl}(\mathcal{U}_{i})$.

In conclusion, $\mathcal{S}'_{i}=\mathbf{cl}(\mathcal{U}_{i})$,
and $\mathcal{S}'_{i}=\mathbf{cl}(\mathcal{S}{}_{i})$.
\end{proof}

\subsection{Proof of Theorem \ref{thm:1d-q0-exist}}
Before proving the theorem, we first have the following lemma, whose proof will be given after the proof of Theorem \ref{thm:1d-q0-exist}.
\begin{lemma}\label{lemma:neighbor}
In the 1D market,  let $x_i$ denote the i-th surviving company in increasing order of their coordinates.  We have  $L_{i} = R_{i-1}$ for all $i \in \mathcal{N}_S(\mathbf{P})$.
%the left neighbor of $i$ is $i-1$ andthe right neighbor of $i$ is $i+1$.
\end{lemma}

\begin{proof} (Theorem \ref{thm:1d-q0-exist})
From
%(\ref{eq:p-q0})
(7) we know that when $q=0$, the brand effect factor will be the same to every companies, regardless of the $\beta$ value. Thus, we consider $\beta=0$.
%the result for $\beta=0$ applies to general $\beta\geq0$.

%Let $\beta=0$.
For any companies $i$, other companies' aggregate prices at $\mathbf{x_i}=x_i$ are always positive due to distance.
This means that for any companies $i$, given $P_{-i}$, $i$ can always set a price $P_i$ such that $S_i>0$. Therefore, every company will survive in the market.

For company $i$, by Theorem \ref{thm:convex} and Lemma \ref{lemma:neighbor}, we know that $i$'s market area will be an interval on the line, with neighbors  always being $i-1, i+1$, which satisfy $x_{i-1}<x_i<x_{i+1}$. The right boundary of $i-1$ and $i$ can be calculated as:
\begin{eqnarray*}
P_{i}+(R_{i-1}-x_{i})^{2}&=&P_{i-1}+(R_{i-1}-x_{i-1})^{2} \\
P_{i}+(R_{i}-x_{i})^{2}&=&P_{i+1}+(R_{i}-x_{i+1})^{2}
\end{eqnarray*}
Thus,   the  utility  $W_i$ is given by:
\begin{eqnarray*}
W_i &=& P_i S_i = P_i(R_i-R_{i-1})\\
&=&-P_{i}^{2}(\frac{1}{2d_{i}}+\frac{1}{2d_{i-1}}) \nonumber\\
&&+P_{i}(\frac{P_{i+1}+x_{i+1}^{2}-x_{i}^{2}}{2d_{i}}-\frac{P_{i-1}-x_{i}^{2}+x_{i-1}^{2}}{2d_{i-1}}).
\end{eqnarray*}
Hence, the profit will be a downward parabola for $P_i\in[0,\frac{\frac{P_{i+1}+x_{i+1}^{2}-x_{i}^{2}}{2d_{i}}-\frac{P_{i-1}-x_{i}^{2}+x_{i-1}^{2}}{2d_{i-1}}}{\frac{1}{2d_{i}}+\frac{1}{2d_{i-1}}}]$ and zero otherwise.

We see then the payoff of $i$ is continuous in $\mathbf{P}$, quasiconcave in $P_i$. Hence from theorem  \ref {thm:1952}, we see that Nash equilibrium always exists.
\end{proof}

\subsection{Proof of Lemma \ref{lemma:neighbor}}
\begin{proof}
We have seen that $S_{i}=R_{i}-L_{i}>0$ for each surviving company. Consider the right
boundary of company $i$. We  prove $R_i=L_{i+1}$ by contradiction.  Suppose instead $R_i=L_j$ for some $j\neq i+1$. Then,
%$j\neq i+1$
%is the right neighbor of $i$, so
\begin{equation}
P_{i}\{R_{i}\}=P_{j}\{R_{i}\}<P_{k}\{R_{i}\},\,\,\,\forall k\neq i,j
\end{equation}
We have the following cases:
%Each possible situation is discussed below.
\begin{enumerate}
\item Suppose $j<i$. Then,
%We can show that the right boundary of $i$ can not happen between
%$i$ and $j<i$. Otherwise if $j<i$, we have:
\begin{equation}
P_{i}\{R_{i}\}=P_{j}\{R_{i}\}
\end{equation}
By choosing some small positive $\delta$ such that $x'=R_{i}-\delta\in\mathcal{S}_{i}$,
 we have:
\begin{equation}
P_{i}\{x'\}-P_{j}\{x'\}=2\delta(x_{i}-x_{j})>0
\end{equation}
which is an obvious contradiction to $x'\in\mathcal{S}_{i}$.
\item Now suppose $j>i+1$, we show below that $i+1$ will have no market area.

\begin{enumerate}
\item Consider the interval $[R_{i},\infty)$. Since $R_i=L_j$, we know that $P_{j}\{R_{i}\}\leq P_{i+1}\{R_{i}\}$. Let $x'=R_{i}+\Delta x$ for some small $\Delta x$.  We have:
\begin{align*}
&P_{j}\{x'\}\\
=&P_{j}+(R_{i}-x_{j})^{2}+2\Delta x(R_{i}-x_{j})+(\Delta x)^{2}-\beta S_{j}\\
=&P_{j}\{R_{i}\}+2\Delta x(R_{i}-x_{j})+(\Delta x)^{2}
\end{align*}
Similarly,
\[
P_{i+1}\{x'\}=P_{i+1}\{R_{i}\}+2\Delta x(R_{i}-x_{i+1})+(\Delta x)^{2}.
\]
Thus,
\begin{align*}
&P_{i+1}\{x'\}-P_{j}\{x'\}\geq P_{i+1}\{R_{i}\}-P_{j}\{R_{i}\}\\
&+2\Delta x(x_{j}-x_{i+1})>0,\forall x'>R_{i}
\end{align*}
This means that there is no $x\in[R_i, \infty)$ such that $\mathcal{O}(x)={i+1}$, i.e., company $i+1$ has zero market area.

\item Consider the interval $(-\infty,R_{i})$. We know that $P_{i}\{R_{i}\}\leq P_{i+1}\{R_{i}\}$.
Let $x'=R_{i}-\Delta x$.  Similar to the
argument above, we obtain:
\begin{align}
&P_{i+1}\{x'\}-P_{i}\{x'\} \nonumber\\
=&P_{i+1}\{R_{i}\}-P_{i}\{R_{i}\}+2\Delta x(x_{i+1}-x_{i})>0,\forall x'<R_{i}
\end{align}

\end{enumerate}
\end{enumerate}
From the above, we see that the right neighbor of $i$ must be $i+1$. Otherwise $i+1$ will not survive.
\end{proof}

\subsection{Proof of Theorem \ref{thm:1d-q0-condition}}
\begin{proof}
When the market is at equilibrium,  we have:
\begin{equation}\nonumber
\frac{\mathrm{d}W_{i}}{\mathrm{d}P_{i}}=S_{i}+P_{i}\frac{\mathrm{d}S_{i}}{\mathrm{d}P_{i}}=0
\end{equation}
and
\begin{equation}\nonumber
\frac{\mathrm{d}S_{i}}{\mathrm{d}P_{i}}=\frac{1}{2d_{i}}+\frac{1}{2d_{i-1}}.
\end{equation}
Combining these two equations, we have proved the theorem:
\begin{equation}
\frac{P_i}{S_i}=\frac{2d_{i}d_{i-1}}{d_i+d_{i-1}}, \forall i\in \mathcal{N}
\end{equation}
\end{proof}

%%%%%%%%%%%     2D existence proof:

\subsection{Proof of Theorem \ref{theorem:2d-exist}}\label{section:proof-2dtheroem}
\begin{proof}
Similar to Theorem \ref{thm:1d-q0-exist}, we consider $\beta=0$.
First we  show that $S_{i}(P_{i})$ and $W_{i}=P_{i}S_{i}$ are continuous.
 For one neighbor $j$ of $i$, set an axis with
$\mathbf{x_{i}},\mathbf{x_{j}}$ on it. Then, the boundary of $i$ and $j$ will be
a straight line perpendicular to this axis. Let $x_{i}',x_{j}'$
denote the coordinate of $i,j$ on the axis and choose the direction
of the axis so that $x_{i}'<x_{j}'$ (see Fig. \ref{fig:2D-1D}). Denote $d_{ij}=x_{j}-x_{i}$.
Doing so, we turn the problem between $i,j$ into an 1D problem. Let
$x_{ij}'$ denote the coordinate of the boundary on this axis. We
have the following equations when $P_{i}$ is changed by  a small  $\mathrm{d}P_{i}$, where
 $\mathrm{d}x_{ij}'$ denotes the corresponding deviation of $x_{ij}'$:
\begin{eqnarray}
P_{i}+(x_{i}'-x_{ij}')^{2}&=&P_{j}+(x_{j}'-x_{ij}')^{2}\\
P_{i}+\mathrm{d}P_{i}+(x_{i}'-x_{ij}'-dx')^{2}&=&P_{j}+(x_{j}'-x_{ij}'-\mathrm{d}x')^{2}.
\end{eqnarray}
Thus, we have:
\begin{equation}
\mathrm{d}x_{ij}=-\frac{\mathrm{d}P_i}{2d_{ij}}.
\end{equation}
Hence,
\begin{align}
\lim_{dP_{i}\rightarrow0}S_i(P_{i}+\mathrm{d}P_{i})-S_i(P_{i})=&\sum_{j\in\mathcal{NB}(i)}l_{ij}\mathrm{d}x_{i,j} \nonumber\\ =&\sum_{j\in\mathcal{NB}(i)}-\frac{l_{ij}\mathrm{d}P_{i}}{2d_{ij}}=0,
\end{align}
where $l_{ij}$ is the length of boundary between $i,j$. Thus, $S_{i}(P_{i})$ is continuous and piecewise differentiable, and
\begin{equation}
\frac{\mathrm{d}S_{i}}{\mathrm{d}P_{i}}=\sum_{j\in\mathcal{NB}(i)}-\frac{l_{ij}}{2d_{ij}}<0.
\end{equation}
Since $W_{i}(P_{i})=P_{i}S_{i}$,  $W_{i}$ is also continuous in $\mathbf{P}$.  %We know that  $W_{i}(0)=0$ and that at some price $P_{max}$, $i$'s market area will drop to zero.
Now let $P_{i}$ increase
from $0$ to $P_{upper}$ (fixing $P_{-i}$). The polygon $\mathcal{S}_{i}'$ will continuously
shrink and eventually disappear.
If there exists a positive price such that profit becomes zero, denote it as $P_{i,max}$. We know that for $P_i\in [P_{i,max}, P_{upper}]$, $W_i(P_i)=0$.

Consider any time during the shrinking process and consider an edge of the polygon, say
the boundary between $i$ and $j$. We know that there must be at least one more company besides $i,j$ that provides the same lowest price at each of these two  endpoints (See Fig. \ref{fig:dldp}(a)).
%%%%%%%%%%%%%%% figure 1
\begin{figure}
%\centerline{\includegraphics[scale=0.4]{dldp-omni3}}
\centerline{\includegraphics[scale=0.28]{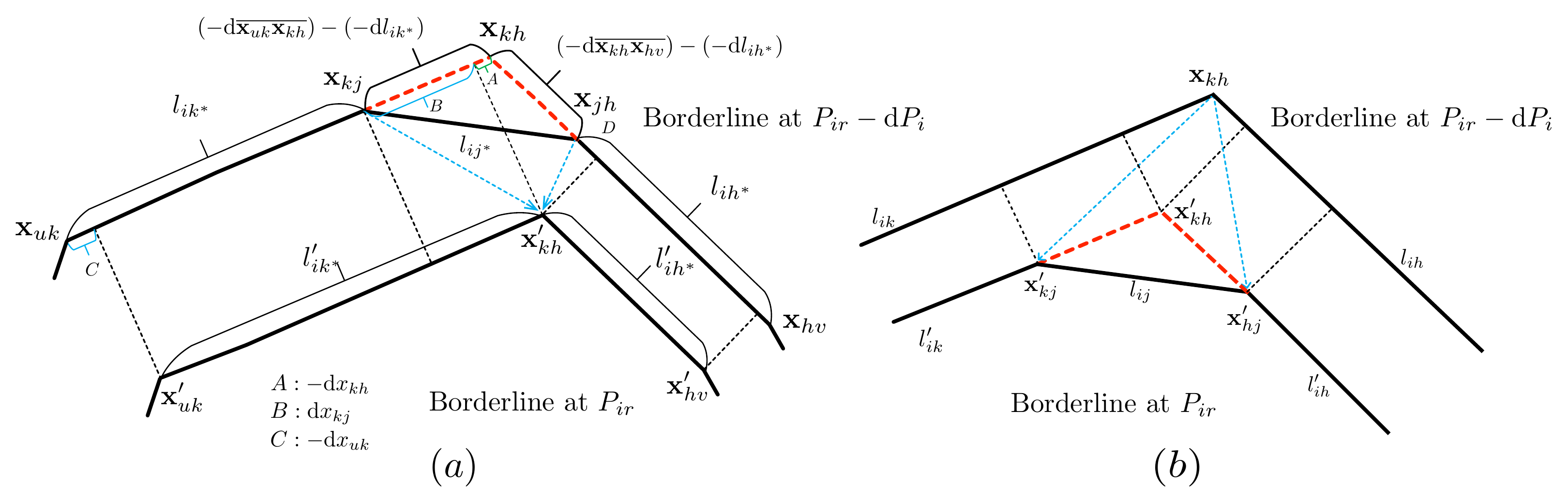}}
\caption{Boundary lines at prices $P_{i}$ and $P_{i}-\mathrm{d}P_{i} $ are shown in the figure. Figure (a) captures the moment before $l_{ij}$ disappears, which is thoroughly discussed in the proof. Here $\mathbf{x}_{kh}$ donotes the intersection point of the extended boundary line of $l_{ik}$ and $l_{ih}$, others are similar.  From this figure we see that at $P_{i}-\mathrm{d}P_{i}$, the boundary line is $\overline{\mathbf{x}_{uk}\mathbf{x}_{kj}\mathbf{x}_{jh}\mathbf{x}_{hv}}$, which becomes $\overline{\mathbf{x'}_{uk}\mathbf{x}'_{kh}\mathbf{x}'_{hv}}$ at $P_{i}$.
Coefficients $M_{ik},N_{ik},C_{ik,1},C_{ik,2}$ $M_{ih},N_{ih},C_{ih,1},C_{ih,2}$ have changed along with the shape of $i$'s market area. Discontinuity of $ \frac{\mathrm{d}^{2}W_{i}}{\mathrm{d}P_{i}^{2}}$ arises from this saltus. Figure (b) captures the situation when a potential competitor $j$ starts to gain profit. The proof is similar to (a) using the triangle inequality.}
\label{fig:dldp}
\end{figure}
Denote them by $k$ and $h$. Then, we can use  $\mathbf{x_{kj}}=(x_{kj},y_{kj})$,
$\mathbf{x_{jh}}=(x_{jh},y_{jh})$ to denote two endpoint of this boundary
line, i.e.,
  $P_{i}\{\mathbf{x_{kj}}\}=P_{j}\{\mathbf{x_{kj}}\}=P_{k}\{\mathbf{x_{kj}}\}$,
$P_{i}\{\mathbf{x_{jh}}\}=P_{j}\{\mathbf{x_{jh}}\}=P_{h}\{\mathbf{x_{jh}}\}$, and
we have following equations:
\begin{align*}
P_{i}+(x_{kj}-x_{i})^{2}+(y_{kj}-y_{i})^{2}&=P_{j}+(x_{kj}-x_{j})^{2}+(y_{kj}-y_{j})^{2}\\
P_{k}+(x_{kj}-x_{k})^{2}+(y_{kj}-y_{k})^{2}&=P_{j}+(x_{kj}-x_{j})^{2}+(y_{kj}-y_{j})^{2}\\
P_{i}+(x_{jh}-x_{i})^{2}+(y_{jh}-y_{i})^{2}&=P_{j}+(x_{jh}-x_{j})^{2}+(y_{jh}-y_{j})^{2}\\
P_{h}+(x_{jh}-x_{h})^{2}+(y_{jh}-y_{h})^{2}&=P_{j}+(x_{jh}-x_{j})^{2}+(y_{jh}-y_{j})^{2}
\end{align*}

In the following, we first express $l_{ij}$ as an explicit function of $P_i$ with these equations. Then, we calculate the second derivative of $W_{i}$ and  prove the quasi-concavity of $W_{i}$.
To this end, define
\begin{eqnarray*}
T_{ji}&=&P_{j}+x_{j}^{2}+y_{j}^{2}-x_{i}^{2}-y_{i}^{2}\\
T_{jk}&=&P_{j}-P_{k}+x_{j}^{2}+y_{j}^{2}-x_{k}^{2}-y_{k}^{2}\\
T_{jh}&=&P_{j}-P_{h}+x_{j}^{2}+y_{j}^{2}-x_{h}^{2}-y_{h}^{2},
\end{eqnarray*}
and let
\begin{eqnarray*}
|A_{kj}|&=&\left|\begin{array}{cc}
2(x_{j}-x_{i}) & 2(y_{j}-y_{i})\\
2(x_{j}-x_{k}) & 2(y_{j}-y_{k})
\end{array}\right|\\
|A_{jh}|&=&\left|\begin{array}{cc}
2(x_{j}-x_{i}) & 2(y_{j}-y_{i})\\
2(x_{j}-x_{h}) & 2(y_{j}-y_{h})
\end{array}\right|
\end{eqnarray*}
The endpoints $\mathbf{x_{m}}$ and $\mathbf{x_{n}}$ can then be expressed as:
\begin{align*}
x_{kj}&=\frac{\left|\begin{array}{cc}
T_{ji}-P_{i} & 2(y_{j}-y_{i})\\
T_{jk} & 2(y_{j}-y_{k})
\end{array}\right|}{|A_{kj}|}, \\
 y_{kj}&=\frac{\left|\begin{array}{cc}
2(x_{j}-x_{i}) & T_{ji}-P_{i}\\
2(x_{j}-x_{k}) & T_{jk}
\end{array}\right|}{|A_{kj}|},\\
x_{jh}&=\frac{\left|\begin{array}{cc}
T_{ji}-P_{i} & 2(y_{j}-y_{i})\\
T_{jh} & 2(y_{j}-y_{h})
\end{array}\right|}{|A_{jh}|},\\
 y_{jh}&=\frac{\left|\begin{array}{cc}
2(x_{j}-x_{i}) & T_{ji}-P_{i}\\
2(x_{j}-x_{h}) & T_{jh}
\end{array}\right|}{|A_{jh}|}.
\end{align*}
Thus, the length of the boundary line $l_{ij}$ can be expressed as:
\begin{eqnarray}
l_{ij}^{2}&&=(x_{kj}-x_{jh})^{2}+(y_{kj}-y_{jh})^{2} \nonumber \\
&&=(\frac{\left|\begin{array}{cc}
T_{ji} & 2(y_{j}-y_{i})\\
T_{kj} & 2(y_{j}-y_{k})
\end{array}\right|-2P_{i}(y_{j}-y_{k})}{|A_{kj}|}\nonumber\\
&&\quad -\frac{\left|\begin{array}{cc}
T_{ji} & 2(y_{j}-y_{i})\\
T_{jh} & 2(y_{j}-y_{h})
\end{array}\right|-2P_{i}(y_{j}-y_{h})}{|A_{jh}|})^{2}  \nonumber  \\
&&\quad+(\frac{\left|\begin{array}{cc}
2(x_{j}-x_{i}) & T_{ji}\\
2(x_{j}-x_{k}) & T_{kj}
\end{array}\right|+2P_{i}(x_{j}-x_{k})}{|A_{kj}|} \nonumber\\
&&\quad -\frac{\left|\begin{array}{cc}
2(x_{j}-x_{i}) & T_{ji}\\
2(x_{j}-x_{h}) & T_{jh}
\end{array}\right|+2P_{i}(x_{j}-x_{h})}{|A_{jh}|})^{2}  \nonumber  \\
&&=(2M_{ij}P_{i}+C_{ij,1})^{2}+(2N_{ij}P_{i}+C_{ij,2})^{2}.	 \label{eq:l_ij-square}
\end{eqnarray}
In (\ref{eq:l_ij-square}), we have used
\begin{align*}
M_{ij}&=\frac{y_{j}-y_{h}}{|A_{jh}|}-\frac{y_{j}-y_{k}}{|A_{kj}|}\\
N_{ij}&=\frac{x_{j}-x_{k}}{|A_{kj}|}-\frac{x_{j}-x_{h}}{|A_{jh}|}\\
C_{ij,1}&=\frac{\left|\begin{array}{cc}
T_{ji} & 2(y_{j}-y_{i})\\
T_{kj} & 2(y_{j}-y_{k})
\end{array}\right|}{|A_{kj}|}-\frac{\left|\begin{array}{cc}
T_{ji} & 2(y_{j}-y_{i})\\
T_{jh} & 2(y_{j}-y_{h})
\end{array}\right|}{|A_{jh}|}\\
C_{ij,2}&=\frac{\left|\begin{array}{cc}
2(x_{j}-x_{i}) & T_{ji}\\
2(x_{j}-x_{k}) & T_{kj}
\end{array}\right|}{|A_{kj}|}-\frac{\left|\begin{array}{cc}
2(x_{j}-x_{i}) & T_{ji}\\
2(x_{j}-x_{h}) & T_{jh}
\end{array}\right|}{|A_{jh}|}
\end{align*}.
%\[
%\begin{array}{cc}
%M_{ij}=\frac{y_{j}-y_{h}}{|A_{jh}|}-\frac{y_{j}-y_{k}}{|A_{kj}|}, & C_{ij,1}=\frac{\left|\begin{array}{cc}
%T_{ji} & 2(y_{j}-y_{i})\\
%T_{kj} & 2(y_{j}-y_{k})
%\end{array}\right|}{|A_{kj}|}-\frac{\left|\begin{array}{cc}
%T_{ji} & 2(y_{j}-y_{i})\\
%T_{jh} & 2(y_{j}-y_{h})
%\end{array}\right|}{|A_{jh}|}\\
%N_{ij}=\frac{x_{j}-x_{k}}{|A_{kj}|}-\frac{x_{j}-x_{h}}{|A_{jh}|}, & C_{ij,2}=\frac{\left|\begin{array}{cc}
%2(x_{j}-x_{i}) & T_{ji}\\
%2(x_{j}-x_{k}) & T_{kj}
%\end{array}\right|}{|A_{kj}|}-\frac{\left|\begin{array}{cc}
%2(x_{j}-x_{i}) & T_{ji}\\
%2(x_{j}-x_{h}) & T_{jh}
%\end{array}\right|}{|A_{jh}|}
%\end{array}.
%\]
Therefore, $l_{ij}$ is continuous and differentiable, and
\begin{equation}
\frac{\mathrm{d}l_{ij}}{\mathrm{d}P_{i}}=\frac{1}{2l_{ij}}\cdot\frac{\mathrm{d}(l_{ij})^{2}}{\mathrm{d}P_{i}}=\frac{[4(M_{ij}^{2}+N_{ij}^{2})P_{i}+2(M_{ij}C_{ij,1}+N_{ij}C_{ij,2})]}{l_{ij}}
\end{equation}
So far, we have looked at  $l_{ij}$. The next step is to prove $W_{i}$'s quasi-concavity. We have:
\begin{equation}\label{eq:dwdp}
\frac{\mathrm{d}W_{i}}{\mathrm{d}P_{i}}=S_{i}+P_{i}\frac{\mathrm{d}S_{i}}{\mathrm{d}P_{i}}=S_{i}-P_{i}\sum_{j\in\mathcal{NB}(i)}\frac{l_{ij}}{2d_{ij}}
\end{equation}
 $\frac{dW_{i}}{dP_{i}}$ is also piecewise differentiable.
Moreover,
\begin{equation}
\frac{\mathrm{d}^{2}W_{i}}{\mathrm{d}P_{i}^{2}}=-2\sum_{j\in\mathcal{NB}(i)}\frac{l_{ij}}{2d_{ij}}-P_{i}\sum_{j\in\mathcal{NB}(i)}\frac{1}{2d_{ij}}\cdot\frac{\mathrm{d}l_{ij}}{\mathrm{d}P_{i}}\label{eq:dw2=00003Dsum}
\end{equation}
Note that if for any $P_{i0}$ that satisfies $\frac{\mathrm{d}W_{i}}{\mathrm{d}P_{i}}|_{P_{i}=P_{i0}}=0$, we always have:
\begin{equation}
\frac{\mathrm{d^{2}}W_{i}}{\mathrm{d}P_{i}^{2}}|_{P_{i}=P_{i0}}\leq0,
\end{equation}
then $W_{i}$ will be quasiconcave.
The most challenging part here is when during the shrinking process,
the shape of polygon $\mathcal{S}_{i}'$ may change.

For example, during the increment of $P_{i}$, $l_{ij}$  can shrink to a point and eventually disappear, at some $P_{i}$ when $j$ is no longer neighbor of $i$.
Therefore, $\frac{\mathrm{d}^{2}W_{i}}{\mathrm{d}P_{i}^{2}}$ will be a piecewise
function, with each piece starting at the time when  one boundary line of
$i$ disappears. Yet from (\ref{eq:dwdp}) we know that  $\frac{\mathrm{d}W_{i}}{\mathrm{d}P_{i}}$ is continuous because the moment $l_{ik}$ starts to appear or disappear, we have $l_{ik}=0$.
 Fig. \ref{fig:dldp}(a) shows the boundary line of $i$ at price $P_{ir}-\mathrm{d}P$ and $P_{ir}$, where at $P_{ir}$ boundary $l_{ij^*}$ disappears. In Fig. \ref{fig:dldp}(a), $l_{ij^*}$ intersects with $l_{ik^*}$ and $l_{ih^*}$ before disappearing.  The two endpoints of $l_{ij^*}$ here are denoted by $\mathbf{x}_{kj}$ and  $\mathbf{x}_{hj}$.
Situation that a potential competitor $j$ starts to gain profit is presented in Fig. \ref{fig:dldp}(b). The   proof of this situation is similar and is hence omitted for brevity.

Now for any piece of $W_{i}$, we can write it as:
\begin{align}
&\frac{\mathrm{d}^{2}W_{i}}{\mathrm{d}P_{i}^{2}}\nonumber\\
=&-\sum_{j\in\mathcal{NB}(i)}\frac{l_{ij}^{2}+P_{i}[2(M_{ij}^{2}+N_{ij}^{2})P_{i}+(M_{ij}C_{ij,1}+N_{ij}C_{ij,2})]}{d_{ij}l_{ij}}
\end{align}
Plugging (\ref{eq:l_ij-square}) into the above, we get:
\begin{align} \label{eq:d2w}
&\frac{\mathrm{d}^{2}W_{i}}{\mathrm{d}P_{i}^{2}}
=-\sum_{j\in\mathcal{NB}(i)}(\frac{6(M_{ij}^{2}+N_{ij}^{2})P_{i}^{2}+5(M_{ij}C_{ij,1}+N_{ij}C_{ij,2})P_{i}}{d_{ij}l_{ij}}\nonumber\\
&+\frac{C_{ij,1}^{2}+C_{ij,2}^{2}}{d_{ij}l_{ij}})	
=-\sum_{j\in\mathcal{NB}(i)}\frac{6(M_{ij}^{2}+N_{ij}^{2})}{d_{ij}l_{ij}}\cdot P_{i}^{2} \nonumber \\
&-\sum_{j\in\mathcal{NB}(i)}\frac{5(M_{ij}C_{ij,1}+N_{ij}C_{ij,2})}{d_{ij}l_{ij}}\cdot P_{i}-\sum_{j\in\mathcal{NB}(i)}\frac{C_{ij,1}^{2}+C_{ij,2}^{2}}{d_{ij}l_{ij}}.
%\label{eq:parabola of W}.
\end{align}
Thus, every piece of $W_{i}$ is  the sum of a series of parabolas, which is a downward parabola. %,
In each piece of $W_{i}$, we can calculate the price $P_{ij}^{*}$, at which $l_{ij}$ shrinks to zero, i.e.,
\begin{equation}
l_{ij}^{2}=(2M_{ij}P_{ij}^{*}+C_{ij,1})^{2}+(2N_{ij}P_{ij}^{*}+C_{ij,2})^{2}=0,\thinspace\thinspace\thinspace\thinspace\thinspace\forall j\in\mathcal{NB}(i)\label{eq:l-square*}
\end{equation}
Thus, the right discontinuity point $P_{ir}$ of each piece of $\frac{\mathrm{d}^{2}W_{i}}{\mathrm{d}P_{i}^{2}}$
is decided by the time when $P_{i}$ reach the smallest $P_{ij}^{*}$:

\begin{equation}
P_{ir}=\min_{j}P_{ij}^{*}\thinspace\thinspace\thinspace\thinspace\thinspace j\in\mathcal{NB}(i). \label{eq:p-ir}
\end{equation}

This means at $P_{i}=P_{ir}$, there exists only one $j^{*}\in\mathcal{NB}(i)$,
such that length of boundary line $l_{ij^*}$ becomes $0$.
Starting from $P_{i}=P_{ij}^{*}$, $\mathcal{S}_{i}'$ change its shape
and the coefficients related to $j^{*}$, i.e., $M_{ik},N_{ik},C_{ik,1},C_{ik,2}$,$M_{ih},N_{ih},C_{ih,1},C_{ih,2}$,  also change.

%\begin{figure}
%\centerline{\includegraphics[scale=0.2]{dw2dp-v1}}
%\caption{An example of  $\frac{\mathrm{d}^{2}W_{i}}{\mathrm{d}P_{i}^{2}}$. Each piece of  $\frac{\mathrm{d}^{2}W_{i}}{\mathrm{d}P_{i}^{2}}$ is a fragment from an increasing downward parabola, therefore  $\frac{\mathrm{d}^{2}W_{i}}{\mathrm{d}P_{i}^{2}}$ is a monotonically increasing function on $[0,P_{upper}]$.}
%\label{fig:dw2dp}
%\end{figure}

Let $f(P_{i};i,j)=\frac{6(M_{ij}^{2}+N_{ij}^{2})P_{i}^{2}+5(M_{ij}C_{ij,1}+N_{ij}C_{ij,2})P_{i}+C_{ij,1}^{2}+C_{ij,2}^{2}}{d_{ij}l_{ij}}$, which is a parabola function of $P_{i}$. From (\ref{eq:d2w}), we know that $\frac{\mathrm{d}^{2}W_{i}}{\mathrm{d}P_{i}^{2}}=-\sum_{j\in\mathcal{NB}(i)}f(P_{i};i,j)$, we can take $\frac{\mathrm{d}^{2}W_{i}}{\mathrm{d}P_{i}^{2}}$
as sum of a group of parabolas $f(P_{i};i,j)$.
In any pieces of $\frac{\mathrm{d}^{2}W_{i}}{\mathrm{d}P_{i}^{2}}$, we know from equations  (\ref{eq:l-square*}) and (\ref{eq:p-ir})
that $P_{ir}=-\frac{C_{ij^*,1}}{2M_{ij^*}}=-\frac{C_{ij^*,2}}{2N_{ij^*}}$
and $P_{ij}^{*}=-\frac{C_{ij^{*},1}}{2M_{ij^{*}}}=-\frac{C_{ij^{*},2}}{2N_{ij^{*}}}>P_{ir}$ for any
$j\in\mathcal{NB}(i)$ and $j\neq j^{*}$. Since all the axis of
symmetry of $f(P_{i};i,j)$ are
larger than $P_{ir}$, the axis of symmetry of $\frac{\mathrm{d}^{2}W_{i}}{\mathrm{d}P_{i}^{2}}$
will also larger than $P_{ir,}$. Hence, each piece of $\frac{\mathrm{d}^{2}W_{i}}{\mathrm{d}P_{i}^{2}}$
is a piece of downward parabola on the left of its symmetry axis and
is increasing (See Fig. \ref{fig:dw2dp} for an example). Let $\frac{\mathrm{d}^{2}W_{i}'}{\mathrm{d}P_{i}^{2}}|_{P_{i}=P_{ir}}$
denote the value of the next piece of parabola $\frac{\mathrm{d}^{2}W_{i}'}{\mathrm{d}P_{i}^{2}}$
starting at $P_{ir}$. If we can prove $\frac{\mathrm{d}^{2}W_{i}'}{\mathrm{d}P_{i}^{2}}|_{P_{i}=P_{ir}}>\frac{\mathrm{d}^{2}W_{i}}{\mathrm{d}P_{i}^{2}}|_{P_{i}=P_{ir}}$,
then $\frac{\mathrm{d}^{2}W_{i}}{\mathrm{d}P_{i}^{2}}$
is an increasing function on $[0,P_{max})$.

To prove the increasing property, we need to take a look at what is
happening at $P_{ir}$. At $P=P_{ir}$, a boundary line $l_{ij*}$ shrinks
to $0$. After this shrink, $l_{ik^{*}}$,$l_{ih^{*}}$, which are
used to intersect with $l_{ij^{*}}$, start to intersect themselves, since
$j^{*}$ no longer being a neighbor of $i$.
The boundary line $\overline{\mathbf{x}_{uk}\mathbf{x}_{kj}\mathbf{x}_{jh}\mathbf{x}_{hv}}$
changes to $\overline{\mathbf{x'}_{uk}\mathbf{x}'_{kh}\mathbf{x}'_{hv}}$, as shown in Fig. \ref{fig:dldp}(a).

Consider the  intersection point $\mathbf{x}_{kh}$ of extended line. Since $P_{-i}$ is fixed and $P_{i}$ is increasing,  we know that  $\mathbf{x}_{kh}$ must be on the boundary line between $k^{*}$ and $h^{*}$,  and move towards the inside of convex polygon $\mathcal{S}'_{i}$ during the increment of $P_{i}$.
%Since the ploygen is convex, we  see from Fig. \ref{fig:dldp} that $\angle \overline{\mathbf{x'}_{uk}\mathbf{x}'_{kh}\mathbf{x}'_{hv}}<\pi$. \textcolor{red}{why need this??}

Now we study the change in length of these boundarys. Consider $\overline{\mathbf{x}_{uk}\mathbf{x}_{kh}}$ and  $\overline{\mathbf{x}'_{uk}\mathbf{x}'_{kh}}$. We see that they are parallel because the movement of boundary is perpendicular to the connecting line of $i,k$. Let us denote $\mathrm{d}\overline{\mathbf{x}_{uk}\mathbf{x}_{kh}}=\overline{\mathbf{x}'_{uk}\mathbf{x}'_{kh}}-\overline{\mathbf{x}_{uk}\mathbf{x}_{kh}}$
in  Fig. \ref{fig:dldp}, and set an axis parallel to describe both $\overline{\mathbf{x}_{uk}\mathbf{x}_{kh}}$ and $\overline{\mathbf{x}'_{uk}\mathbf{x}'_{kh}}$. Then, the 1D  coordinates for the endpoints $\mathbf{x}_{uk},\mathbf{x}'_{uk},\mathbf{x}_{kh},\mathbf{x}'_{kh},\mathbf{x}_{kj}$ on this axis will be  $x_{uk},x'_{uk},x_{kh},x'_{kh},x_{kj}$. Suppose we choose the direction of axis such that $x_{kh}>x_{kj}>x_{ku}$.  Then, we can express the change of length of $\overline{\mathbf{x}_{uk}\mathbf{x}_{kh}}$ in 1D as follows:
 \begin{eqnarray*}
\mathrm{d}\overline{\mathbf{x}_{uk}\mathbf{x}_{kh}}&&=\overline{\mathbf{x}'_{uk}\mathbf{x}'_{kh}}-\overline{\mathbf{x}_{uk}\mathbf{x}_{kh}}\\
&&=(x'_{kh}-x'_{uk})-(x_{kh}-x_{uk}) \\
&&=\mathrm{d}x_{kh}-\mathrm{d}x_{uk}
\end{eqnarray*}
Similarly, we have $\mathrm{d}\overline{\mathbf{x}_{uk}\mathbf{x}_{kj}}=\mathrm{d}x_{kj}-\mathrm{d}x_{uk}$.
Also, $l_{ik^*}=\overline{\mathbf{x}_{uk}\mathbf{x}_{kj}}$ becomes $l'_{ik^*}=\overline{\mathbf{x}'_{uk}\mathbf{x}'_{kh}}$. With the same axis, we have
$\mathrm{d}l_{ik*}=\mathrm{d}x_{kj}-\mathrm{d}x_{uk}=(x'_{kh}-x{}_{kj})-\mathrm{d}x_{uk}$.

From Fig. \ref{fig:dldp}, we  see that
\begin{eqnarray}
\overline{\mathbf{x}_{kj}\mathbf{x}_{kh}}&&=-\mathbf{d}\overline{\mathbf{x}_{kj}\mathbf{x}_{kh}} \nonumber\\
&&=-\mathrm{d}(\overline{\mathbf{x}_{uk}\mathbf{x}_{kh}}-l_{ik^*}) \nonumber\\
&&=\mathrm{d}x_{kj}-\mathrm{d}x_{kh} \nonumber\\
&&=x'_{kh}-x_{kj}-x'_{kh}+x_{kh} 	\nonumber\\
&&=x_{kh}-x_{kj}>0
\end{eqnarray}

Similarly, we have $\overline{\mathbf{x}_{kh}\mathbf{x}_{jh}}
=(-\mathrm{d}\overline{\mathbf{x}_{kh}\mathbf{x}_{hv}})-(-\mathrm{d}l_{ih^{*}})$. From the triangle inequality, we know that $\overline{\mathbf{x}_{kj}\mathbf{x}_{kh}}+\overline{\mathbf{x}_{kh}\mathbf{x}_{jh}}\geq\overline{\mathbf{x}_{kj}\mathbf{x}_{jh}}$, i.e.,
\begin{equation}
(-\mathrm{d}\overline{\mathbf{x}_{uk}\mathbf{x}_{kh}})-(-\mathrm{d}l_{ik^{*}})+(-\mathrm{d}\overline{\mathbf{x}_{kh}\mathbf{x}_{hv}})-(-l_{ih^{*}})\geq-\mathrm{d}l_{ij^{*}} \label{eq:triangle}
\end{equation}
Due to the continuity of  $\frac{\mathrm{d}^{2}W_{i}'}{\mathrm{d}P_{i}^{2}}$, we know that
\begin{equation*}
\frac{\mathrm{d}\overline{\mathbf{x}_{uk}\mathbf{x}_{kh}}}{\mathbf{d}P_i}|_{P_{ir}-\mathbf{d}P_{ir}\rightarrow P_{ir}}
=\frac{\mathrm{d}l'_{ik^{*}}}{\mathrm{d}P_{i}}|_{P_{i}=P_{ir}},
\end{equation*}
and that
\begin{equation*}
\frac{\mathrm{d}\overline{\mathbf{x}_{kh}\mathbf{x}_{hv}}}{\mathbf{d}P_i}|_{P_{ir}-\mathbf{d}P_{ir}\rightarrow P_{ir}}
=\frac{\mathrm{d}l'_{ih^{*}}}{\mathrm{d}P_{i}}|_{P_{i}=P_{ir}}.
\end{equation*}
Thus,  we have from (\ref{eq:triangle}) that:
\begin{eqnarray}
&&\frac{\mathrm{d}^{2}W_{i}'}{\mathrm{d}P_{i}^{2}}|_{P_{i}=
P_{ir}}-\frac{\mathrm{d}^{2}W_{i}}{\mathrm{d}P_{i}^{2}}|_{P_{i}=P_{ir}}	\nonumber\\
&&=\frac{\mathrm{d}^{2}W_{ik^{*}}}{\mathrm{d}P_{i}^{2}}|_{P_{i}\rightarrow P_{ir}-}+\frac{\mathrm{d}^{2}W_{ih^{*}}}{\mathrm{d}P_{i}^{2}}|_{P_{i}\rightarrow P_{ir}-} \nonumber\\
&&\quad-(\frac{\mathrm{d}^{2}W_{ik^{*}}}{\mathrm{d}P_{i}^{2}}|_{P_{i}\rightarrow P_{ir}+}\frac{\mathrm{d}^{2}W_{ih^{*}}}{\mathrm{d}P_{i}^{2}}|_{P_{i}\rightarrow P_{ir}+}+\frac{\mathrm{d}^{2}W_{ij^{*}}}{\mathrm{d}P_{i}^{2}}|_{P_{i}\rightarrow P_{ir}+}) 	\nonumber\\
&&=\frac{\mathrm{d}l_{ik^{*}}}{\mathrm{d}P_{i}}+\frac{\mathrm{d}l_{ih^{*}}}{\mathrm{d}P_{i}}+\frac{\mathrm{d}l_{ij^{*}}}{\mathrm{d}P_{i}}-\frac{\mathrm{d}\overline{\mathbf{x}_{uk}\mathbf{x}_{kh}}}{\mathrm{d}P_{i}}-\frac{\mathrm{d}\overline{\mathbf{x}_{kh}\mathbf{x}_{hv}}}{\mathrm{d}P_{i}}\geq0.
\end{eqnarray}

Combining  these results, we have that $\frac{\mathrm{d}^{2}W_{i}}{\mathrm{d}P_{i}^{2}}$
is a piecewise increasing function, and each piece is a fragment
of an increasing downward parabola.
%Noted that situation in Fig. \ref{fig:dldp} (b) also leads to the same result.

Now we  confirm the quasiconcavity of $W_i(P_i)$. We first have $W(P_{i})\geq0$ for $P_{i}\in[0,\infty)$.
%Hence if $W(0)=0$ and $W(P_{max})=0$, according to the Mean Value Theorem, there exists at least one $P_{i}^{*}$ such that $\frac{\mathrm{d}W_{i}}{\mathrm{d}P_{i}}|_{P_{i}=P^{*}}=0$.
If $\frac{\mathrm{d}^{2}W_{i}}{\mathrm{d}P_{i}^{2}}<0$ in $[0, P_{upper}]$, since $W_i(P_i)$ is continuous, we know $W_i(P_i)$ is quasiconcave.
%
%$P^{*}\in(P',P_{max}]$
Otherwise we consider the case when $P_{i,max}$ exists.
%let  $P'$ denote the price  such that   at $P_i<P'$ we have
% $\frac{\mathrm{d}^{2}W_{i}}{\mathrm{d}P_{i}^{2}}<0$ and at $P_i>P'$ have $\frac{\mathrm{d}^{2}W_{i}}{\mathrm{d}P_{i}^{2}}>0$.
 %\textcolor{red}{why exists?}
Since $\frac{\mathrm{d}W_{i}}{\mathrm{d}P_{i}}(P_{i,max})<0$,  $\frac{\mathrm{d}W_{i}}{\mathrm{d}P_{i}}(0)>0$, and we know that $\frac{\mathrm{d}^{2}W_{i}}{\mathrm{d}P_{i}^{2}}$ is monotone,  $\frac{\mathrm{d}W_{i}}{\mathrm{d}P_{i}}$ will be non-decreasing before maximum and then non-increasing until $P_i=P_{i,max}$, which means $\frac{\mathrm{d}W_{i}}{\mathrm{d}P_{i}}$ is quasiconcave.

If $P_{i,max}$ does not exists,  $W(P_{upper})>0$. Yet we know that the left side derivative of $W_i(P_i)$ will be negative at $P_{upper}$, since $P_{upper}$ is   high enough such that increment in price only leads to market area decrease. Similarly we can prove the quasiconcavity of $W_i(P_i)$.

Since $W(P_{i})$ is continuous in $P_{j},\forall j\in\mathcal{NB}(i)$
and quasiconcave, from Theorem \ref{thm:1952}, Nash equilibrium always exists. %\textcolor{red}{we should state this as a theorem up front. then use it.}
\end{proof}

\subsection{Proof of Theorem \ref{thm:2d-condition}}
\begin{proof}
Based on the discussion in the proof of Theorem \ref{theorem:2d-exist}, we know that when market is  at equilibrium,
\begin{eqnarray*}
\frac{\mathrm{d}W_{i}}{\mathrm{d}P_{i}}=S_{i}+P_{i}\frac{\mathrm{d}S_{i}}{\mathrm{d}P_{i}}=0
\end{eqnarray*}
and
\begin{eqnarray*}
\frac{\mathrm{d}S_{i}}{\mathrm{d}P_{i}}=\sum_{j\in\mathcal{NB}(i)}-\frac{l_{ij}}{2d_{ij}},
\end{eqnarray*}
which give us the theorem:
\begin{eqnarray*}
P_i=\frac{1}{\sum_{j\in\mathcal{NR}(i)}\frac{l_{ij}}{2d_{ij}}}S_i, \quad\forall i\in \mathcal{N}.
\end{eqnarray*}
\end{proof}

\subsection{Proof of Theorem \ref{thm:1d-q1-exist}}
To prove Theorem \ref{thm:1d-q1-exist}, we need the following lemmas, whose proof will be given after the proof.
\begin{lemma}\label{lemma:potential}
Company $i$'s potential competitor $j$ provides the same aggregate price as $i$  only at the boundary of $i$.
\end{lemma}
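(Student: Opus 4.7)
The plan is to argue by case analysis on Definition \ref{def:potential}, using as the single driving observation that membership of $\mathbf{x}$ in $\mathcal{S}_i$ forces $i$ to be the \emph{unique} aggregate-price minimizer at $\mathbf{x}$, i.e.\ $P_i\{\mathbf{x}\} < P_k\{\mathbf{x}\}$ strictly for all $k \neq i$. Consequently, any point at which a competitor ties $i$'s price cannot sit in $\mathcal{S}_i$; combined with Theorem \ref{thm:convex}, which identifies $\mathbf{cl}(\mathcal{S}_i) = \mathcal{S}'_i$ with a convex polyhedron, such a point must lie in $\mathbf{cl}(\mathcal{S}_i) \setminus \mathcal{S}_i$, which is the set $\mathcal{BR}(i)$.

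For case (a) of Definition \ref{def:potential} (where $S_j = 0$), the definition hands us some $\mathbf{x} \in \mathbf{cl}(\mathcal{S}_i)$ with $P_j\{\mathbf{x}\} = P_i\{\mathbf{x}\}$. The strictness observation above then immediately rules out $\mathbf{x} \in \mathcal{S}_i$, so $\mathbf{x} \in \mathcal{BR}(i)$. More generally, I would argue that \emph{any} $\mathbf{x}$ at which $P_j\{\mathbf{x}\} = P_i\{\mathbf{x}\}$ and $i$ is still a (surviving) owner must lie on $\mathcal{BR}(i)$, since an interior point of $\mathcal{S}_i$ would contradict the unique-owner condition.

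For case (b) (where $j \in \mathcal{NR}(i)$ and $|\mathbf{cl}(\mathcal{S}_j) \cap \mathbf{cl}(\mathcal{S}_i)| / |\mathcal{BR}(i)| = 0$), the same strictness argument applies: the locus $\mathcal{L}_{i,j} = \{\mathbf{x} : P_i\{\mathbf{x}\} = P_j\{\mathbf{x}\}\}$ is the hyperplane from Theorem \ref{thm:convex}, and any $\mathbf{x} \in \mathcal{L}_{i,j} \cap \mathbf{cl}(\mathcal{S}_i)$ cannot be in $\mathbf{int}(\mathcal{S}_i)$ by unique ownership. Hence the price-tie locus restricted to $\mathbf{cl}(\mathcal{S}_i)$ sits inside $\mathcal{BR}(i)$, even though the measure-zero overlap of $\mathbf{cl}(\mathcal{S}_j)$ with $\mathbf{cl}(\mathcal{S}_i)$ may collapse it to a single point (or lower-dimensional face).

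The main obstacle, and the only substantive work, is the topological bookkeeping: one has to verify that $\mathcal{S}_i$ is actually open (so that $\mathbf{int}(\mathcal{S}_i) = \mathcal{S}_i$ and therefore $\mathcal{BR}(i) = \mathbf{cl}(\mathcal{S}_i) \setminus \mathcal{S}_i$), which follows from continuity of each $P_k\{\mathbf{x}\}$ in $\mathbf{x}$ and the fact that $\mathcal{S}_i$ is cut out by a finite collection of strict inequalities. Once that is settled, the lemma reduces to the one-line observation that equality $P_j\{\mathbf{x}\} = P_i\{\mathbf{x}\}$ is incompatible with the strict-inequality defining membership in $\mathcal{S}_i$.
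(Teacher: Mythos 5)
Your proof is correct, but it reaches the conclusion by a genuinely more direct route than the paper's. The paper splits on Definition \ref{def:potential}: for case (a) it first shows (via the convexity argument of Theorem \ref{thm:convex}) that $j$'s ownership set is a single point, and then locates that point on $\mathcal{BR}(i)$ by a perturbation argument --- if the tie occurred off the boundary, one could find a nearby $\mathbf{x}'$ with $P_j\{\mathbf{x}'\}<P_i\{\mathbf{x}'\}$, giving $j$ positive market area and contradicting $S_j=0$; for case (b) it simply observes that condition (b) is vacuous in the 1D market (each neighbor contributes exactly one of the two boundary points, so the ratio is $1/2$, never $0$). You instead derive everything from the single observation that $\mathcal{S}_i$ is by definition the set where $i$ is the \emph{strict} minimizer over all companies, so a tie point can never lie in $\mathcal{S}_i$, hence any tie point in $\mathbf{cl}(\mathcal{S}_i)$ lies in $\mathbf{cl}(\mathcal{S}_i)\setminus\mathcal{S}_i\subseteq\mathcal{BR}(i)$. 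This is cleaner, dimension-independent, and does not even need $S_j=0$; note also that your worry about openness of $\mathcal{S}_i$ is unnecessary for the inclusion you need, since $\mathbf{int}(\mathcal{S}_i)\subseteq\mathcal{S}_i$ already gives $\mathbf{cl}(\mathcal{S}_i)\setminus\mathcal{S}_i\subseteq\mathbf{cl}(\mathcal{S}_i)/\mathbf{int}(\mathcal{S}_i)=\mathcal{BR}(i)$ regardless. The only thing the paper's version buys that yours does not is the auxiliary fact that in 1D a wiped-out potential competitor's ownership set is a \emph{single} point, which supports (though is not strictly required by) the later narrative in the proof of Theorem \ref{thm:1d-q1-exist} that such a competitor's market area grows continuously from zero.
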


\begin{proof} (Theorem \ref{thm:1d-q1-exist})
We first consider the situation when no ``wipe out''  happens in the market, that is,
\begin{equation}
\beta<\frac{2d_{i}d_{i-1}}{d_i+d_{i-1}}\quad \forall i\in\mathcal{N}.	\label{condition:wipe-out}
\end{equation}

Based on the discussion in Section 6.1,  we know that $S_i(P_i)$ is piecewise continuous, we still need to prove that $S_i(P_i)$ is continuous at the junction point between two pieces.  Without loss of generality,  suppose $i$'s right neighbor changes from $j$ to $u$ after the junction point while left neighbor remains $k$.  From Lemma \ref{lemma:neighbor} we know that $x_i<x_u<x_j$.

According to Lemma \ref{lemma:potential}, we know that at the junction point of $S_i$,  there must exists a potential competitor $u$ provides the same aggregate price at $i$ and $j$'s boundary point.  Otherwise if  $i$ has no potential competitors, we can show that $i$'s neigbors will not change.
 This can be explained in the following two cases :(i) If $i$ increases an infinitely small price, since $i$ has no potential competitors, then customers at border of $i$ will immediately  choose only from current neighbors of $i$. Hence $i$'s neighbor will have an increment in market area, and no new neighbor of $i$ appears during price increment. (ii) If $i$ lower its price by an infinitely small amount, each neighbor of $i$ will suffer from a market area loss. We still need to say that during this process, no neighbor of $i$ will disappear, as long as the price decrease of $i$ is small enough. This is because for any neighbor of $i$, say $j$, though it suffers from a market area decrease, yet its neighbor $t\in \mathcal{NR}(j), t\neq i$ will increase its market area and provides lower price at $j$ and  $t$'s old boundary(the boundary between $j$ and $k$ before $i$ decreases its price), thus no potential competitor of $j$ can start to survive. In this case, we rule out the probability that $j$ will be wiped out.

 After a sufficiently small increase of $P_i$, $u$ begins to survive and gradually increase its market area starting from zero, which means $i$ do not suffer any sudden drop in market share. After then, $i$'s neighbor change into $k$ and $u$, and $\mathrm{d}S_i/\mathrm{d}P_i$ also change. So far we have shown that $S_i(P_i)$ is continuous and piece-wise differentiable (the derivative changes when the neighbors change), so as $W_i(P_i)=P_iS_i$.

Each continuous piece of $S_i(P_i)$ is a concave function. That is because the growth rate of $\frac{\mathrm{d}S_{i}}{\mathrm{d}P_{i}}$ will always be non-positive. To understand this， we can consider the following fact. Let $\delta>0$ be a sufficiently small value, we will be able to conclude that in each continuous piece of $S_i(P_i)$ we have  $0>\frac{\mathrm{d}S_{i}}{\mathrm{d}P_{i}}|_{P_i-\delta}>\frac{\mathrm{d}S_{i}}{\mathrm{d}P_{i}}|_{P_i+\delta}$.
The reason is  $i$'s market area become smaller at $P_i+\delta$ compare to $P_i-\delta$, while both its neighbors  enjoy market area increment during the same time, which will lead to more market area shrink for $i$ at $P_i+\delta$ than at $P_i-\delta$, since one's market area has positive feedback when $q=1$.

 We now prove the fact that at any junction point (a break point between two pieces),  $P_{junc}$ of $W_i$, if the left derivative of $W_i(P_{junc})$ is negative, then the right derivative of $W_i(P_{junc})$ will also be negative. Thus, $W$ will monotonically increase to a maximum point and then monotonically decrease, which is quasiconcave. Since $\frac{\mathrm{d}W_{i}}{\mathrm{d}P_{i}}=S_i+P_i\frac{\mathrm{d}S_{i}}{\mathrm{d}P_{i}}$, we only need to consider the change in $\frac{\mathrm{d}S_{i}}{\mathrm{d}P_{i}}$ at $P_i=P_{junc}$. Let  $\frac{\mathrm{d}S_{i}}{\mathrm{d}P_{i}}|_{P_{junc}+}$ denote the right derivative of $S_i$ at $P_{junc}$ and $\frac{\mathrm{d}S_{i}}{\mathrm{d}P_{i}}|_{P_{junc}-}$ for the left derivative. By proving $\frac{\mathrm{d}S_{i}}{\mathrm{d}P_{i}}|_{P_{junc}+}<\frac{\mathrm{d}S_{i}}{\mathrm{d}P_{i}}|_{P_{junc}-}$ we can achieve our goal.

%Suppose the contradict that $\frac{\mathrm{d}W_{i}}{\mathrm{d}P_{i}}|_{P_{junc}+}>\frac{\mathrm{d}W_{i}}{\mathrm{d}P_{i}}|_{P_{junc}-}$ at $P_{junc}$.

At $P_i=P_{junc}$, we have $P_i\{R_i\}=P_{j}\{R_i\}=P_{u}\{R_i\}$. Suppose $u$ does not exist and  $P_i$ has a positive deviation $\mathrm{d}P_i\rightarrow0$, we can represent the price at $i$'s new boundary $R_i'$ as $P_i'\{R_i'\}=P_j'\{R_i'\}$. We can see that $P_i'\{R_i'\}=P_i\{R_i\}+\mathrm{d}P_i+2\mathrm{d}R_i(R_i-x_i)+(\mathrm{d}R_i)^2-\beta\mathrm{d}S_i$.
Consider the fact that customers at $i$ and $j$'s boundary suffers the highest price from $i$ or $j$, because they are farthest to their suppliers. Hence, if during the $\mathrm{d}P_i$ increment, $P_u\{R'\}>P'_i\{R'\}$, then no customers will choose to buy from $u$, thus $u$ will not survive and $j$ will remain $i$'s neighbor. Since we consider the situation when $u$ starts to survive, we know that $P_u\{R'\}<P'_i\{R'\}$, and that right boundary of $i$ has shrunk more compare to the situation when $j$ being its neighbor, so does the left boundary of $i$. Thus, $\frac{\mathrm{d}S_{i}}{\mathrm{d}P_{i}}|_{P_{junc}+}<\frac{\mathrm{d}S_{i}}{\mathrm{d}P_{i}}|_{P_{junc}-}$.

Hence, $W_i(P_{i})$ is a quasiconcave function and $W_i(P_{i})=0$. If there exists a positive price such that $W_i(P_i)=0$, denote it as $P_{i,max}$. We can see that for $P_i\geq P_{i,max}$, $W_i(P_i)=0$.

According to Theorem \ref{thm:1952}, since  $W_{i}(\mathbf{P})$ is continuous in $\mathbf{P}\in [0, P_{upper}]^N$ and quasiconcave
in $P_{i}$,  Nash equilibrium  exists when no ``wipe out'' can happens.

For the case when  condition (\ref{condition:wipe-out}) does not hold, we can operate in the following way to give at least one equilibrium. We choose some companies to be hidden, which means they can be seen as not existed in the market, while the others being ``activated.''  For the simplicity in presentation below, we define a condition among those activated companies:
\begin{eqnarray}\beta<\frac{2d_{i}d_{i-1}}{d_i+d_{i-1}}\quad \forall\,\, \text{activated company } i .	
\label{condition:activated-wipe-out}
\end{eqnarray}
  The way we choose is as follow:
(i) Hide all companies first. Then activate companies one by one until we can not activate any one more companies without bringing ``wipe out'' phenomenon into the market. That is, all activated companies satisfied the condition (\ref{condition:activated-wipe-out}), but activating any one of the hidden companies will violates it.
(ii) Then we will check that when any one hidden company, say $j$, is activated, it violates the inequality $\beta<\frac{2d_{j}d_{j-1}}{d_j+d_{j-1}}$. Otherwise according to (i), there must exists one or two of its neighbors, say $k$, such that $\beta>\frac{2d_{k}d_{k-1}}{d_k+d_{k-1}}$. Then we activate $j$ and hide $k$.
(iii) Now we can guarantee that all activated companies satisfy the condition  (\ref{condition:activated-wipe-out}), while each  hidden company is being ``wiped out''.

Then for activated companies there exists equilibrium as proved before. Now let those hidden companies come back to the market with price $P_{upper}$. From the analysis of equation (15), we can see that hidden company $i$ will never survive by unilateral price change. For any activated company $j$, since hidden companies are at the status of price being  $P_{upper}$ and have no market area, they can never survive by $j$'s unilateral action. Hence, $j$'s price will remain unchanged. This setting guarantees at least one equilibrium when (\ref{condition:wipe-out}) does not hold.

Concluding all the results above, we have proved the existence of Nash equilibrium in 1D market with $q=1$.
\end{proof}

\subsection{Proof of Lemma \ref{lemma:potential}}
\begin{proof}
Suppose $i$'s  potential competitor $j$ satisfy the condition (1) in Definition 6. Since $S_j=0$,  we can show that $j$ only has at most one point of market area.
This is because we can use similar method in  Theorem \ref{thm:convex} to show that $j$'s market point set is convex, i.e., line segment in 1D market,  and since $S_j=0$, $j$ has only one market point.
Since price on the boundary of $i$ is highest price among all points in $\mathbf{cl}(\mathcal{S}_i)$,  $j$ can only provide the same price at a boundary point of $i$. Otherwise we can always find a nonzero neighborhood of $\mathcal{V}(\mathbf{x})$, where $P_j\{\mathbf{x}\}=P_i\{\mathbf{x}\}$, $\mathbf{x}\in\mathbf{cl}(\mathcal{S}_i)$, such that $\exists \mathbf{x}'\in \mathcal{V}(x), P_j\{\mathbf{x}'\}<P_i\{\mathbf{x}'\}$, which violates $S_j=0$.

Since we consider the 1D market, $i$'s  potential competitor $j$ will never satisfy the condition (2) in Definition 6. This is because $i$ only has two boundary points in 1D market, for any neighbor $k$ of $i$, we must have $\frac{|\mathbf{cl}(\mathcal{S}_k)\cap\mathbf{cl}(\mathcal{S}_i)|}{|\mathcal{BR}(i)|}=\frac{1}{2}$.
\end{proof}

\subsection{Proof of Theorem \ref{thm:condition}}
\begin{theorem}\label{thm:condition}
When a market with dimension $K$ is at a Nash equilibrium, for any company $i$, if it has no potential competitors,
\begin{equation} \label{eq:1d-part1}
P_i^*=c_i\cdot S_i^*,  \quad\quad \forall i \in \mathcal{N}_S
\end{equation}
where $c_i=-\frac{dP_{i}}{dS_{i}}|_{P_{i}=P_{i}^{*}}>0$.

Otherwise, we have:
\begin{equation} \label{ineq:equil}
c_i'\cdot S_{i}^*\leq P_{i}^*\leq c_i''\cdot S_{i}^*, \quad\quad \forall i \in \mathcal{N}_S.
\end{equation}
Here $c_i'=-\frac{dP_{i}}{dS_{i}}|_{P_{i}=P_{i}^{*}+}>0$, $c_i''=-\frac{dP_{i}}{dS_{i}}|_{P_{i}=P_{i}^{*}-}>0$.
Inequality (\ref{ineq:equil}) becomes an equality only when $i$'s potential competitor does not survive  at $P_i^*$.
\end{theorem}
We know from Theorem \ref{thm:1d-q1-exist} that $W_i(P_i)$ is continuous.

First we prove the result when $i$ has no potential competitors at equilibrium. This condition guarantees that with an infinitely small deviation of price of company $i$, its neighbors will remain the same, as explained in the proof of Theorem \ref{thm:1d-q1-exist}, which also ensures that  the payoff function is continuous and differentiable here.

Note that the analysis above is insensitive to $q$ and market dimension $K$. Hence we have for $i$ at equilibrium,
\begin{eqnarray}\label{eq:thm-proof-dw}
\frac{\mathrm{d}W_{i}}{\mathrm{d}P_{i}}=S_{i}+P_{i}\frac{\mathrm{d}S_{i}}{\mathrm{d}P_{i}}=0.
\end{eqnarray}
Thus, we prove the first equation (\ref{eq:1d-part1}) of the theorem.
\begin{eqnarray*}
P_i^*=c_i\cdot S_i^*,  \quad\quad \forall i \in \mathcal{N}_S
\end{eqnarray*}
where $c_i=-\frac{dP_{i}}{dS_{i}}|_{P_{i}=P_{i}^{*}}>0$.

For the situation when $i$ has potential competitors at equilibrium, $W_i(P_i^*)$ may not be differentiable here since $S_i(P_i)$ changes after the potential competitors survives. We thus consider the left and right derivatives at $P_i^*$.
Since the market is at the equilibrium, any unilateral deviation of $P_i^*$ will decrease $i$'s payoff. Hence,  we must have:
\begin{eqnarray}
\frac{\mathrm{d}W_i}{\mathrm{d}P_i}|_{P_i^*-}>0	,\quad
\frac{\mathrm{d}W_i}{\mathrm{d}P_i}|_{P_i^*+}<0 \label{ineq:left-right}
\end{eqnarray}
These two inequalities imply that $W_i$ reaches its maximum at $P_i^*$, i.e., $W_i$ increases before $P_i^*$ and decreases after it. Plugging (\ref{ineq:left-right}) into (\ref{eq:thm-proof-dw}) we prove  inequality (\ref{ineq:equil}) of the theorem.
It can be seen that if $i$'s potential competitors did not survive at $P_i^*$, the form of $S_i$ will not change. Hence, $W_i$ will be differentiable at $P_i^*$,  and  inequality (\ref{ineq:equil}) becomes equation (\ref{eq:1d-part1}).

Summing results above proves the theorem.

\bibliographystyle{abbrv}
\bibliography{acmsmall-market-bibfile}

\begin{thebibliography}{10}

\bibitem{aaker1996measuring}
D.~A. Aaker.
\newblock Measuring brand equity across products and markets.
\newblock {\em California management review}, 38(3):103, 1996.

\bibitem{aggarwal2004effects}
P.~Aggarwal.
\newblock The effects of brand relationship norms on consumer attitudes and
  behavior.
\newblock {\em Journal of Consumer Research}, 31(1):87--101, 2004.

\bibitem{buil2013influence}
I.~Buil, E.~Mart{\'\i}nez, and L.~de~Chernatony.
\newblock The influence of brand equity on consumer responses.
\newblock {\em Journal of consumer marketing}, 30(1):62--74, 2013.

\bibitem{1986}
P.~Dasgupta and E.~Maskin.
\newblock The existence of equilibrium in discontinuous economic games, i:
  Theory.
\newblock {\em The Review of Economic Studies}, 53(1):pp. 1--26, 1986.

\bibitem{d1979hotelling}
C.~d'Aspremont, J.~J. Gabszewicz, and J.-F. Thisse.
\newblock On hotelling's" stability in competition".
\newblock {\em Econometrica: Journal of the Econometric Society}, pages
  1145--1150, 1979.

\bibitem{debreu1952social}
G.~Debreu.
\newblock A social equilibrium existence theorem.
\newblock {\em Proceedings of the National Academy of Sciences of the United
  States of America}, 38(10):886, 1952.

\bibitem{fan1952fixed}
K.~Fan.
\newblock Fixed-point and minimax theorems in locally convex topological linear
  spaces.
\newblock {\em Proceedings of the National Academy of Sciences of the United
  States of America}, 38(2):121, 1952.

\bibitem{fetter1924economic}
F.~A. Fetter.
\newblock The economic law of market areas.
\newblock {\em The Quarterly Journal of Economics}, pages 520--529, 1924.

\bibitem{Gartner2014}
Gartner.
\newblock Gartner says sales of smartphones grew 20 percent in third quarter of
  2014, 2014.

\bibitem{givon1978market}
M.~Givon and D.~Horsky.
\newblock Market share models as approximators of aggregated heterogeneous
  brand choice behavior.
\newblock {\em Management Science}, pages 1404--1416, 1978.

\bibitem{glicksberg1952further}
I.~L. Glicksberg.
\newblock A further generalization of the kakutani fixed point theorem, with
  application to nash equilibrium points.
\newblock {\em Proceedings of the American Mathematical Society},
  3(1):170--174, 1952.

\bibitem{hanjoul1989advances}
P.~Hanjoul, H.~Beguin, and J.-C. Thill.
\newblock Advances in the theory of market areas.
\newblock {\em Geographical Analysis}, 21(3):185--196, 1989.

\bibitem{1929}
H.~Hotelling.
\newblock Stability in competition.
\newblock {\em The Economic Journal}, 39(153):pp. 41--57, 1929.

\bibitem{hyson1950economic}
C.~D. Hyson and W.~P. Hyson.
\newblock The economic law of market areas.
\newblock {\em The Quarterly Journal of Economics}, pages 319--327, 1950.

\bibitem{irmen1998competition}
A.~Irmen and J.-F. Thisse.
\newblock Competition in multi-characteristics spaces: Hotelling was almost
  right.
\newblock {\em Journal of Economic Theory}, 78(1):76--102, 1998.

\bibitem{Kats199589}
A.~Kats.
\newblock More on hotelling's stability in competition.
\newblock {\em International Journal of Industrial Organization}, 13(1):89 --
  93, 1995.

\bibitem{lauga2011product}
D.~O. Lauga and E.~Ofek.
\newblock Product positioning in a two-dimensional vertical differentiation
  model: The role of quality costs.
\newblock {\em Marketing Science}, 30(5):903--923, 2011.

\bibitem{apple}
P.~Sikka.
\newblock Why iphone’s average selling price increase is good for apple,
  2014.

\bibitem{Thurrott}
P.~Thurrott.
\newblock Declining smartphone prices hit samsung again, 2014.

\bibitem{veendorp1995differentiation}
E.~Veendorp and A.~Majeed.
\newblock Differentiation in a two-dimensional market.
\newblock {\em Regional Science and Urban Economics}, 25(1):75--83, 1995.

\bibitem{villas2004consumer}
J.~M. Villas-Boas.
\newblock Consumer learning, brand loyalty, and competition.
\newblock {\em Marketing Science}, 23(1):134--145, 2004.

\bibitem{wernerfelt1991brand}
B.~Wernerfelt.
\newblock Brand loyalty and market equilibrium.
\newblock {\em Marketing Science}, 10(3):229--245, 1991.

\end{thebibliography}

%
% For AAMAS-2016, as references are unlimited but appendices must fit within
% 8 pages, the References section must come after the appendices (if any)
%
% The following two commands are all you need in the
% initial runs of your .tex file to
% produce the bibliography for the citations in your paper.
%\bibliographystyle{abbrv}
%\bibliography{sigproc}  % sigproc.bib is the name of the Bibliography in this case
% You must have a proper ".bib" file
%  and remember to run:
% latex bibtex latex latex
% to resolve all references
%
% ACM needs 'a single self-contained file'!
%\balancecolumns % GM June 2007
% That's all folks!
\end{document}